\documentclass[11pt]{article} 

\usepackage{amsfonts}
\usepackage{amssymb}
\usepackage{amsthm}
\usepackage{amsmath}
\usepackage{array}
\usepackage{mdwmath}
\usepackage{mdwtab}
\usepackage{eqparbox}
\usepackage{subfig}
\usepackage{stfloats}
\usepackage{fixltx2e}
\usepackage{mathrsfs}
\usepackage{graphicx}
\usepackage{parskip}
\usepackage[numbers]{natbib}
\usepackage[ruled]{algorithm2e}
\usepackage[tmargin=1.0in,bmargin=1.0in,rmargin=1.0in,lmargin=1.0in]{geometry}

\usepackage{setspace}

\theoremstyle{plain}
\newtheorem{thm}{\protect\theoremname}
  \theoremstyle{definition}
  
  \theoremstyle{remark}
  
  \theoremstyle{plain}
  \theoremstyle{definition}
  
\newtheorem{lemma}{Lemma}
\newtheorem{remark}{Remark}
\makeatother
\makeatletter
\setlength{\@fptop}{5pt}
\makeatother
\usepackage[english]{babel}
  
  \providecommand{\claimname}{Claim}
  \providecommand{\definitionname}{Definition}
  \providecommand{\examplename}{Example}
\providecommand{\theoremname}{Theorem}

\newcommand{\norm}[1]{\left\lVert#1\right\rVert}
\newcommand{\abs}[1]{\left\lvert#1\right\rvert}

\newcommand{\sign}{\operatorname{sgn}}
\newcommand{\Tr}{\operatorname{Tr}}
\def\ci{\perp\!\!\!\perp}

\title{Iterative Thresholding Algorithm for Sparse Inverse Covariance Estimation}

{\footnotesize
\author{
Dominique Guillot\footnote{Equal contributors.} \\
Dept. of Statistics \\
Stanford University \\
 Stanford, CA 94305\\
\texttt{dguillot@stanford.edu} \\
\and
Bala Rajaratnam$^\ast$\\
Dept. of Statistics \\
Stanford University \\
 Stanford, CA 94305 \\
\texttt{brajarat@stanford.edu} \\
\and
Benjamin T. Rolfs$^\ast$ \\
ICME\\
Stanford University\\
Stanford, CA 94305 \\
\texttt{benrolfs@stanford.edu} \\
\and
 Arian Maleki \\
Dept. of ECE\\
Rice University \\
 Houston, TX 77005\\
\texttt{arian.maleki@rice.edu} \\
\and
 Ian Wong \\
 Dept. of EE and Statistics\\
 Stanford University\\
 Stanford, CA 94305\\
\texttt{ianw@stanford.edu}
}

}

\date{}
%

\begin{document}
\maketitle

\begin{abstract}
The $\ell_1$-regularized maximum likelihood estimation problem has recently become a topic of great interest within the machine learning, statistics, and optimization communities as a method for producing sparse inverse covariance estimators. In this paper, a proximal gradient method (\texttt{G-ISTA}) for performing $\ell_1$-regularized covariance matrix estimation is presented. Although numerous algorithms have been proposed for solving this problem, this simple proximal gradient method is found to have attractive theoretical and numerical properties. \texttt{G-ISTA} has a linear rate of convergence, resulting in an $\mathcal{O}(\log \varepsilon)$ iteration complexity to reach a tolerance of $\varepsilon$. This paper gives eigenvalue bounds for the \texttt{G-ISTA} iterates, providing a closed-form linear convergence rate. The rate is shown to be closely related to the condition number of the optimal point. Numerical convergence results and timing comparisons for the proposed method are presented. \texttt{G-ISTA} is shown to perform very well, especially when the optimal point is well-conditioned.  
\end{abstract}

\onehalfspacing

\section{Introduction}
\fontsize{12}{15}
\selectfont

Datasets from a wide range of modern research areas are increasingly high dimensional, which presents a number of theoretical and practical challenges. A fundamental example is the problem of estimating the covariance matrix from a dataset of $n$ samples $\{X^{(i)} \}_{i=1}^n$, drawn \emph{i.i.d} from a $p$-dimensional, zero-mean, Gaussian distribution with covariance matrix $\Sigma \in \mathbb{S}_{++}^p$, $X^{(i)} \sim \mathcal{N}_p(0,\Sigma)$, where $\mathbb{S}_{++}^p$ denotes the space of $p \times p$ symmetric, positive definite matrices. When $n\geq p$ the maximum likelihood covariance estimator $\hat{\Sigma}$ is the sample covariance matrix $S = \frac{1}{n}\sum_{i=1}^n X^{(i)} X^{(i)^T}$. A problem however arises when $n<p$, due to the rank-deficiency in $S$. In this sample deficient case, common throughout several modern applications such as genomics, finance, and earth sciences, the matrix $S$ is not invertible, and thus cannot be directly used to obtain a well-defined estimator for the inverse covariance matrix $\Omega := \Sigma^{-1}$.

A related problem is the inference of a Gaussian graphical model (\cite{WhittakerBook, Lauritzen1996}), that is, a sparsity pattern in the inverse covariance matrix, $\Omega$. Gaussian graphical models provide a powerful means of dimensionality reduction in high-dimensional data. Moreover, such models allow for discovery of conditional independence relations between random variables since, for multivariate Gaussian data, sparsity in the inverse covariance matrix encodes conditional independences. Specifically, if $X = (X_i)_{i=1}^p \in \mathbb{R}^p$ is distributed as $X \sim \mathcal{N}_p(0, \Sigma)$, then $ (\Sigma^{-1})_{ij}= \Omega_{ij} = 0 \iff X_i \ci X_j | \{X_k\}_{k\neq i,j}$, where the notation $A \ci B | C$ denotes the conditional independence of $A$ and $B$ given the set of variables $C$ (see \cite{WhittakerBook, Lauritzen1996}). If a dataset, even one with $n\gg p$ is drawn from a normal distribution with sparse inverse covariance matrix $\Omega$, the inverse sample covariance matrix $S^{-1}$ will almost surely be a dense matrix, although the estimates for those $\Omega_{ij}$ which are equal to 0 may be very small in magnitude. As sparse estimates of $\Omega$ are more robust than $S^{-1}$, and since such sparsity may yield easily interpretable models, there exists significant impetus to perform sparse inverse covariance estimation in very high dimensional low sample size settings.

\citet{Banerjee08} proposed performing such sparse inverse covariance estimation by solving the $\ell_1$-penalized maximum likelihood estimation problem,

\begin{alignat}{1}
\label{eq:sparse_estimator}
\Theta^{\ast}_\rho & = \arg \min_{\Theta \in \mathbb{S}^{p}_{++}} -\log \det \Theta + \langle S, \Theta \rangle + \rho \norm{\Theta}_1,
\end{alignat}

where $\rho>0$ is a penalty parameter, $\langle S, \Theta \rangle = \Tr{(S\Theta)}$, and $\norm{\Theta}_1 = \sum_{i,j}|\Theta_{ij}|$. For $\rho>0$, Problem \eqref{eq:sparse_estimator} is strongly convex and hence has a unique solution, which lies in the positive definite cone $\mathbb{S}^{p}_{++}$ due to the $\log \det$ term, and is hence invertible. Moreover, the $\ell_1$ penalty induces sparsity in $\Theta^{\ast}_\rho$, as it is the closest convex relaxation of the $0-1$ penalty, $\norm{\Theta}_0 = \sum_{i,j} \mathbb{I}(\Theta_{ij} \neq 0)$, where $\mathbb{I}(\cdot)$ is the indicator function \cite{Boyd2004Book}. The unique optimal point of problem \eqref{eq:sparse_estimator}, $\Theta^{\ast}_\rho$, is both invertible (for $\rho > 0$) and sparse (for sufficiently large $\rho$), and can be used as an inverse covariance matrix estimator.

In this paper, a proximal gradient method for solving Problem \eqref{eq:sparse_estimator} is proposed.  The resulting ``graphical iterative shrinkage thresholding algorithm", or \texttt{G-ISTA}, is shown to converge at a linear rate to $\Theta^{\ast}_\rho$, that is, its iterates $\Theta_t$ are proven to satisfy
\begin{alignat}{1}
\norm{\Theta_{t+1} - \Theta^{\ast}_\rho}_F & \leq s \norm{\Theta_t - \Theta^{\ast}_\rho}_F,
\end{alignat}
for a fixed worst-case contraction constant $s \in (0,1)$, where $\norm{\cdot}_F$ denotes the Frobenius norm. The convergence rate $s$ is provided explicitly in terms of $S$ and $\rho$, and importantly, is related to the condition number of $\Theta^{\ast}_\rho$.

We also note that methods outside the penalized likelihood framework have been proposed in the context of graphical models. In particular graphical model estimation and related problems has also be undertaken either in the Bayesian or testing frameworks. The reader is referred to the theoretical work of \citep{Dawid1993, Hero2011, Hero2012, Khare2011,  Letac2007, Rajaratnam2008}, among others, for greater detail. 

The paper is organized as follows. Section \ref{sec:prior_work} describes prior work related to solution of Problem \eqref{eq:sparse_estimator}. The \texttt{G-ISTA} algorithm is formulated in Section \ref{sec:methodology}. Section \ref{sec:convergence} contains the convergence proofs of this algorithm, which constitutes the primary mathematical result of this paper. Numerical results are presented in Section \ref{sec:results}, and concluding remarks are made in Section \ref{sec:conclusion}.
  
\section{Prior Work}
\label{sec:prior_work}

While several excellent general convex solvers exist (for example, \cite{cvx} and \cite{Becker2010}), these are not always adept at handling high dimensional problems (i.e., $p>1000$). As many modern datasets have several thousands of variables, numerous authors have proposed efficient algorithms designed specifically to solve the $\ell_1$-penalized sparse maximum likelihood covariance estimation problem \eqref{eq:sparse_estimator}. 

These can be broadly categorized as either primal or dual methods. Following the literature, we refer to primal methods as those which directly solve Problem \eqref{eq:sparse_estimator}, yielding a concentration estimate. Dual methods \cite{Banerjee08} yield a covariance matrix by solving the constrained problem,
\begin{align}
\label{eq:dual_problem}
\begin{array}{ll}
\underset{U \in \mathbb{R}^{p\times p}}{\mbox{minimize}} & - \log \det ( S + U) - p \\
\mbox{subject to} & \norm{U}_\infty \leq \rho,
\end{array}
\end{align}
where the primal and dual variables are related by $\Theta = (S+U)^{-1}$. Both the primal and dual problems can be solved using block methods (also known as ``row by row" methods), which sequentially optimize one row/column of the argument at each step until convergence. The primal and dual block problems both reduce to $\ell_1$-penalized regressions, which can be solved very efficiently. 

\subsection{Dual Methods}

A number of dual methods for solving Problem \eqref{eq:sparse_estimator} have been proposed in the literature. \citet{Banerjee08} consider a block coordinate descent algorithm to solve the block dual problem, which reduces each optimization step to solving a box-constrained quadratic program. Each of these quadratic programs is equivalent to performing a ``lasso" ($\ell_1$-regularized) regression. \citet{Friedman08} iteratively solve the lasso regression as described in \cite{Banerjee08}, but do so using coordinate-wise descent. Their widely used solver, known as the graphical lasso (\texttt{glasso}) is implemented on \texttt{CRAN}. Global convergence rates of these block coordinate methods are unknown.   \citet{dAspremont2008} use Nesterov's smooth approximation scheme, which produces an $\varepsilon$-optimal solution in $\mathcal{O}(1 / \varepsilon)$ iterations. A variant of Nesterov's smooth method is shown to have a $\mathcal{O}(1 / \sqrt{\varepsilon})$ iteration complexity in \cite{Lu2009, Lu2010}. 

\subsection{Primal Methods}
Interest in primal methods for solving Problem \eqref{eq:sparse_estimator} has been growing for many reasons. One important reason stems from the fact that convergence within a certain tolerance for the dual problem does not necessarily imply convergence within the same tolerance for the primal. 

\citet{YuanLin2007} use interior point methods based on the max-det problem studied in \cite{Vandenberghe96}. \citet{Yuan2009alternating} use an alternating-direction method, while \citet{Scheinberg2010ADMM} proposes a similar method and show a sublinear convergence rate. \citet{MazumderInsights2011} consider block-coordinate descent approaches for the primal problem, similar to the dual approach taken in \cite{Friedman08}. \citet{MazumderPrimal2011} also solve the primal problem with block-coordinate descent, but at each iteration perform a partial as opposed to complete block optimization, resulting in a decreased computational complexity per iteration. Convergence rates of these primal methods have not been considered in the literature and hence theoretical guarantees are not available. \citet{Hsieh2011_QUIC} propose a second-order proximal point algorithm, called \texttt{QUIC}, which converges superlinearly locally around the optimum. 

\section{Methodology}
\label{sec:methodology}

In this section, the \emph{graphical iterative shrinkage thresholding algorithm} (\texttt{G-ISTA}) for solving the primal problem \eqref{eq:sparse_estimator} is presented. A rich body of mathematical and numerical work exists for general iterative shrinkage thresholding and related methods; see, in particular, \cite{Beck2009, Combettes2005, Nesterov1983, NesterovBook, Nesterov2007, Tseng2008Survey}. A brief description is provided here. 

\subsection{General Iterative Shrinkage Thresholding (ISTA)}
\label{subsec:ISTA}

Iterative shrinkage thresholding algorithms (ISTA) are general first-order techniques for solving problems of the form 
\begin{alignat}{1}
\label{eq:split_problem}
\underset{x \in \mathcal{X}}{\mbox{minimize}} & \ F(x) := f(x) + g(x),
\end{alignat} 
where $\mathcal{X}$ is a Hilbert space with inner product $\langle \cdot, \cdot \rangle$ and associated norm $\norm{\cdot}$, $f : \mathcal{X}\rightarrow \mathbb{R}$ is a continuously differentiable, convex function, and $g : \mathcal{X} \rightarrow \mathbb{R}$ is a lower semi-continuous, convex function, not necessarily smooth. The function $f$ is also often assumed to have Lipschitz-continuous gradient $\nabla f$, that is, there exists some constant $L>0$ such that 
\begin{alignat}{1}
\norm{\nabla f(x_1) - \nabla f(x_2)} & \leq L \norm{ x_1 - x_2}
\end{alignat}
for any $x_1, x_2 \in \mathcal{X}$. 

For a given lower semi-continuous convex function $g$, the proximity operator of $g$, denoted by $\mbox{prox}_{g} : \mathcal{X}\rightarrow \mathcal{X}$, is given by
\begin{alignat}{1}
\label{eq:prox_function}
\mbox{prox}_{g}(x) &= \arg\min_{y\in \mathcal{X}} \left\{g(y) +\frac{1}{2}\norm{x-y}^{2}\right\},
\end{alignat}
It is well known (for example, \cite{Combettes2005}) that $x^\ast \in \mathcal{X}$ is an optimal solution of problem \eqref{eq:split_problem} if and only if
\begin{alignat}{1}
x^\ast & = \mbox{prox}_{\zeta g}(x^\ast - \zeta \nabla f(x^\ast))
\end{alignat} 
for any $\zeta > 0$. The above characterization suggests a method for optimizing problem \eqref{eq:split_problem} based on the iteration 
\begin{alignat}{1}
\label{eq:prox_iter}
x_{t+1} = \mbox{prox}_{\zeta_t g} \left( x_t - \zeta_t \nabla f (x_t) \right)
\end{alignat}
for some choice of step size, $\zeta_t$. This simple method is referred to as an iterative shrinkage thresholding algorithm (ISTA). For a step size $\zeta_t \leq \frac{1}{L}$, the ISTA iterates $x_t$ are known to satisfy 
\begin{alignat}{1}
F(x_t) - F(x^\ast) & \simeq \mathcal{O}\left(\frac{1}{t}\right), \forall t,
\end{alignat}
where $x^{\ast}$ is some optimal point, which is to say, they converge to the space of optimal points at a sublinear rate. If no Lipschitz constant $L$ for $\nabla f$ is known, the same convergence result  still holds for $\zeta_t$ chosen such that 
\begin{alignat}{1}
\label{eq:line_search}
f(x_{t+1}) & \leq Q_{\zeta_t}(x_{t+1}, x_t),
\end{alignat}
where $Q_\zeta (\cdot, \cdot) : \mathcal{X} \times \mathcal{X} \rightarrow \mathbb{R}$ is a quadratic approximation to $f$, defined by
\begin{alignat}{1}
\label{eq:quad_approx}
Q_{\zeta} (x,y) & = f(y) + \langle x - y, \nabla f(y) \rangle + \frac{1}{2\zeta}\norm{x- y}^2.
\end{alignat}
See \cite{Beck2009} for more details. 

\subsection{Graphical Iterative Shrinkage Thresholding (\texttt{G-ISTA})}
\label{subsec:G-ISTA}

The general method described in Section \ref{subsec:ISTA} can be adapted to the sparse inverse covariance estimation Problem \eqref{eq:sparse_estimator}. Using the notation introduced in Problem \eqref{eq:split_problem}, define $f,g : \mathbb{S}^p_{++} \rightarrow \mathbb{R}$ by $f(X) = -\log\det(X) + \langle S,X \rangle$ and $g(X) = \rho \norm{X}_1$. Both are continuous convex functions defined on $\mathbb{S}^{p}_{++}$. Although the function $\nabla f(X) = S - X^{-1}$ is not Lipschitz continuous over $\mathbb{S}^p_{++}$, it is Lipschitz continuous within any compact subset of $\mathbb{S}^p_{++}$ (See Lemma \ref{lem:lipschitz_grad} of the Supplemental section).
\vspace{.5pc}
\begin{lemma}[\cite{Banerjee08, Lu2009}]
\label{lem:opt_bound}
The solution of Problem \eqref{eq:sparse_estimator}, $\Theta^{\ast}_\rho$, satisfies $\alpha I \preceq \Theta^{\ast}_\rho \preceq \beta I$, for
\begin{alignat}{2}
\alpha & = \frac{1}{\norm{S}_{2} + p \rho}, \quad \beta & = \min \left\{ \frac{p - \alpha \Tr(S)}{\rho}, \gamma \right\}, 
\end{alignat}
and
\begin{alignat}{1}
\gamma & = \left \{ \begin{array}{ll} \min\{\mathbf{1}^T \abs{S^{-1}}\mathbf{1}, (p - \rho \sqrt{p} \alpha) \norm{S^{-1}}_{2} - (p-1)\alpha \}& \text{if }S\in \mathbb{S}_{++}^p \\
2\mathbf{1}^T\abs{(S + \frac{\rho}{2}I)^{-1}}\mathbf{1} - \Tr((S + \frac{\rho}{2}I)^{-1}) &\text{otherwise,}\end{array}\right.
\end{alignat}
where $I$ denotes the $p\times p$ dimensional identity matrix and $\mathbf{1}$ denotes the $p$-dimensional vector of ones.
\end{lemma}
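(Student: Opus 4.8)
The plan is to work from the first-order optimality condition for the strongly convex problem \eqref{eq:sparse_estimator}. Since $f(\Theta) = -\log\det\Theta + \langle S,\Theta\rangle$ has gradient $S - \Theta^{-1}$ and $g(\Theta) = \rho\norm{\Theta}_1$, optimality reads $(\Theta^\ast_\rho)^{-1} = S + \rho Z$ for some $Z$ in the subdifferential of $\norm{\cdot}_1$ at $\Theta^\ast_\rho$; in particular $Z$ has entries in $[-1,1]$ and satisfies $\langle Z, \Theta^\ast_\rho\rangle = \norm{\Theta^\ast_\rho}_1$. These two facts — an entrywise bound on $Z$ and the subgradient identity $\langle Z,\Theta^\ast_\rho\rangle = \norm{\Theta^\ast_\rho}_1$ — drive everything. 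For the lower bound I would bound the largest eigenvalue of $(\Theta^\ast_\rho)^{-1} = S + \rho Z$: since $\norm{Z}_2 \le p$ (each row and column of $Z$ has absolute sum at most $p$), $\lambda_{\max}((\Theta^\ast_\rho)^{-1}) \le \norm{S}_2 + \rho p$, and inverting gives $\lambda_{\min}(\Theta^\ast_\rho) \ge (\norm{S}_2 + p\rho)^{-1} = \alpha$, i.e. $\alpha I \preceq \Theta^\ast_\rho$.

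For the first upper bound, multiply the optimality identity by $\Theta^\ast_\rho$ and take the trace: $p = \Tr(S\Theta^\ast_\rho) + \rho\langle Z,\Theta^\ast_\rho\rangle = \Tr(S\Theta^\ast_\rho) + \rho\norm{\Theta^\ast_\rho}_1$. Because $S \succeq 0$ and $\Theta^\ast_\rho \succeq \alpha I$, the inequality $\Tr(S(\Theta^\ast_\rho - \alpha I)) \ge 0$ gives $\Tr(S\Theta^\ast_\rho) \ge \alpha\Tr(S)$, so $\norm{\Theta^\ast_\rho}_1 \le (p - \alpha\Tr(S))/\rho$. Since $\lambda_{\max}(\Theta^\ast_\rho) \le \norm{\Theta^\ast_\rho}_F \le \norm{\Theta^\ast_\rho}_1$, this is the first candidate for $\beta$.

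The $\gamma$ bound is where the real work lies, and I expect it to be the main obstacle. The key auxiliary fact is that for any $A \succ 0$ one has $-\log\det\Theta + \langle A,\Theta\rangle \ge \log\det A + p$ for all $\Theta \succ 0$, with the minimum attained at $\Theta = A^{-1}$; this follows by diagonalizing $A^{1/2}\Theta A^{1/2}$ and using $x - \log x - 1 \ge 0$. When $S \succ 0$, I would compare objective values $F(\Theta^\ast_\rho) \le F(S^{-1})$; the auxiliary fact with $A = S$ cancels the smooth parts and leaves $\norm{\Theta^\ast_\rho}_1 \le \mathbf{1}^T\abs{S^{-1}}\mathbf{1}$. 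A sharper estimate comes from the trace identity: bounding $\Tr(S\Theta^\ast_\rho) \ge \lambda_{\min}(S)\Tr(\Theta^\ast_\rho)$ and $\norm{\Theta^\ast_\rho}_1 \ge \norm{\Theta^\ast_\rho}_F \ge \sqrt{p}\,\alpha$ yields $\Tr(\Theta^\ast_\rho) \le (p - \rho\sqrt{p}\,\alpha)\norm{S^{-1}}_2$, and since the remaining $p-1$ eigenvalues are each at least $\alpha$, $\lambda_{\max}(\Theta^\ast_\rho) \le \Tr(\Theta^\ast_\rho) - (p-1)\alpha$; taking the minimum of the two produces $\gamma$ in the invertible case.

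When $S$ is only positive semidefinite, I would replace $S$ by $\tilde S = S + \frac{\rho}{2}I \succ 0$, using the splitting $\langle S,\Theta\rangle + \rho\norm{\Theta}_1 \ge \langle \tilde S,\Theta\rangle + \frac{\rho}{2}\norm{\Theta}_1$, which is valid because the off-diagonal contribution to $\norm{\Theta}_1$ is nonnegative and $\langle \frac{\rho}{2}I,\Theta\rangle = \frac{\rho}{2}\Tr\Theta$. I would then compare $F(\Theta^\ast_\rho)$ against its value at $\tilde S^{-1}$ and apply the auxiliary fact with $A = \tilde S$, which yields $\norm{\Theta^\ast_\rho}_1 \le 2\mathbf{1}^T\abs{(S+\frac{\rho}{2}I)^{-1}}\mathbf{1} - \Tr((S+\frac{\rho}{2}I)^{-1})$, the ``otherwise'' form of $\gamma$. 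The delicate points throughout are selecting the correct comparison point, justifying the splitting trick, and ensuring that every eigenvalue estimate invokes the already-established lower bound $\alpha$; once these are in place, $\lambda_{\max}(\Theta^\ast_\rho) \le \norm{\Theta^\ast_\rho}_1$ converts each $\ell_1$ bound into the claimed upper eigenvalue bound and hence $\Theta^\ast_\rho \preceq \beta I$.
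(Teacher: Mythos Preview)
The paper does not actually prove this lemma: it is stated with citations to \cite{Banerjee08, Lu2009} and used as a black box, with no argument supplied in either the main text or the supplement. Your proposal therefore cannot be compared against a proof in the paper, but it can be assessed on its own merits, and it is correct. The lower bound via $\|Z\|_2 \le p$, the trace identity $p = \Tr(S\Theta^\ast_\rho) + \rho\|\Theta^\ast_\rho\|_1$ giving the first upper bound, the objective comparison $F(\Theta^\ast_\rho) \le F(S^{-1})$ combined with the elementary inequality $-\log\det\Theta + \langle A,\Theta\rangle \ge \log\det A + p$, and the splitting $\langle S,\Theta\rangle + \rho\|\Theta\|_1 \ge \langle S + \tfrac{\rho}{2}I,\Theta\rangle + \tfrac{\rho}{2}\|\Theta\|_1$ in the singular case all go through exactly as you describe; in particular your derivation in the singular case cleanly recovers the constant $2\mathbf{1}^T|(S+\tfrac{\rho}{2}I)^{-1}|\mathbf{1} - \Tr((S+\tfrac{\rho}{2}I)^{-1})$. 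This is essentially the argument given in the cited references (especially Lu's), so you have reconstructed the intended proof rather than found a new one.
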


Note that $f+g$ as defined is a continuous, strongly convex function on $\mathbb{S}^p_{++}$. Moreover, by Lemma \ref{lem:lipschitz_grad} of the supplemental section, $f$ has a Lipschitz continuous gradient when restricted to the compact domain $a I \preceq \Theta \preceq b I$. Hence, $f$ and $g$ as defined meet the conditions described in Section \ref{subsec:ISTA}. 

The proximity operator of $\rho \norm{X}_1$ for $\rho >0$ is the soft-thresholding operator, $\eta_{\rho} : \mathbb{R}^{p\times p} \rightarrow \mathbb{R}^{p\times p}$, defined entrywise by
\begin{alignat}{1}
\left[\eta_\rho (X) \right]_{i,j} = \sign(X_{i,j})\left(|X_{i,j}| - \rho \right)_+, 
\end{alignat}
where for some $x \in \mathbb{R}$, $(x)_+ := \max (x,0)$ (see \cite{Combettes2005}). Finally, the quadratic approximation $Q_{\zeta_t}$ of $f$, as in equation \eqref{eq:quad_approx}, is given by 
\begin{alignat}{1}
\label{eq:Q_ISTA}
Q_{\zeta_t}(\Theta_{t+1}, \Theta_t) & =  -\log \det(\Theta_t) + \langle S, \Theta_t \rangle  + \langle \Theta_{t+1} - \Theta_t, S -\Theta_t^{-1} \rangle +  \frac{1}{2\zeta_{t}}\norm{\Theta_{t+1} - \Theta_t}_{F}^2.  
\end{alignat}

The \texttt{G-ISTA} algorithm for solving Problem \eqref{eq:sparse_estimator} is given in Algorithm \ref{alg:ISTA}. As in \cite{Beck2009}, the algorithm uses a backtracking line search for the choice of step size. The procedure terminates when a pre-specified duality gap is attained. The authors found that an initial estimate of $\Theta_0$ satisfying $[\Theta_0]_{ii} = (S_{ii} + \rho)^{-1}$ works well in practice.  Note also that the positive definite check of $\Theta_{t+1}$ during Step (1) of Algorithm \ref{alg:ISTA} is accomplished using a Cholesky decomposition, and the inverse of $\Theta_{t+1}$ is computed using that Cholesky factor.

\begin{algorithm}[H]
\small
\caption{\texttt{G-ISTA} for Problem \eqref{eq:sparse_estimator}}
\label{alg:ISTA}
\DontPrintSemicolon
\SetKwInOut{Input}{input}\SetKwInOut{Output}{output}
\Input{Sample covariance matrix $S$, penalty parameter $\rho$, tolerance $\varepsilon$, backtracking constant $c \in (0,1)$, initial step size $\zeta_{1,0}$, initial iterate $\Theta_0$. Set $\Delta := 2\varepsilon$.}
\While{ $\Delta > \varepsilon$ } {  
\textit{(1) Line search:} Let $\zeta_{t}$ be the largest element of $\{c^j \zeta_{t,0}\}_{j = 0,1,\ldots}$ so that for $\Theta_{t+1} = \eta_{\zeta_{t} \rho}\left(\Theta_{t} - \zeta_{t}(S - \Theta_{t}^{-1})\right)$, the following are satisfied:
\[
\Theta_{t+1} \succ 0 \quad \text{and} \quad f(\Theta_{t+1}) \leq Q_{\zeta_{t}}(\Theta_{t+1}, \Theta_{t}), 
\]
for $Q_{\zeta_t}$ as defined in \eqref{eq:Q_ISTA}. 
\;
\textit{(2) Update iterate:} $\Theta_{t+1} = \eta_{\zeta_{t} \rho}\left(\Theta_{t} - \zeta_{t}(S - \Theta_{t}^{-1})\right)$ \;
\textit{(3) Set next initial step, $\zeta_{t+1,0}$.} See Section \ref{subsubsec:initial_step}. \;
\textit{(4) Compute duality gap:}
\begin{alignat}{1}
\Delta &= -\log\det(S + U_{t+1}) - p - \log\det\Theta_{t+1} + \langle S, \Theta\rangle + \rho \norm{\Theta_{t+1}}_1 \nonumber, 
\end{alignat}
where $(U_{t+1})_{i,j} = \min\{\max\{([\Theta_{t+1}^{-1}]_{i,j} - S_{i,j}), -\rho\}, \rho\}$.\;
}
\Output{$\varepsilon$-optimal solution to problem \eqref{eq:sparse_estimator}, $\Theta^\ast_\rho = \Theta_{t+1}$.}
\end{algorithm}

\subsubsection{Choice of initial step size, $\zeta_0$}
\label{subsubsec:initial_step}
Each iteration of Algorithm \ref{alg:ISTA} requires an initial step size, $\zeta_0$. The results of Section \ref{sec:convergence} guarantee that any $\zeta_0 \leq \lambda_{\min}(\Theta_{t})^2$ will be accepted by the line search criteria of Step 1 in the next iteration. However, in practice this choice of step is overly cautious; a much larger step can often be taken. Our implementation of Algorithm \ref{alg:ISTA} chooses the Barzilai-Borwein step \cite{Barzilai1988}. This step, given by
\begin{align}
\zeta_{t+1,0} = \frac{\Tr{((\Theta_{t+1} - \Theta_t)(\Theta_{t+1} - \Theta_t))}}{\Tr{((\Theta_{t+1} - \Theta_t)(\Theta_{t}^{-1} - \Theta_{t+1}^{-1}))}},
\end{align}
is also used in the SpaRSA algorithm \cite{Wright2009}, and approximates the Hessian around $\Theta_{t+1}$.   If a certain number of maximum backtracks do not result in an accepted step, \texttt{G-ISTA} takes the safe step, $\lambda_{\min}(\Theta_{t})^2$. Such a safe step can be obtained from $\lambda_{\max}(\Theta_{t}^{-1})$, which in turn can be quickly approximated using power iteration. 

\section{Convergence Analysis}
\label{sec:convergence}
In this section, linear convergence of Algorithm \ref{alg:ISTA} is discussed. Throughout the section, $\Theta_t$ $(t = 1, 2, \dots)$ denote the iterates of Algorithm \ref{alg:ISTA}, and $\Theta^{\ast}_\rho$ the optimal solution to Problem \eqref{eq:sparse_estimator} for $\rho>0$. The minimum and maximum eigenvalues of a symmetric matrix $A$ are denoted by $\lambda_{\mbox{min}}(A)$ and $\lambda_{\mbox{\textrm{max}}}(A)$, respectively. 
\vspace{.5pc}
\begin{thm}
\label{thm:ISTA_convergence}
Assume that the iterates $\Theta_t$ of Algorithm \ref{alg:ISTA} satisfy $a I\preceq \Theta_t \preceq b I, \forall t$ for some fixed constants $0< a < b$. If $\zeta_t \leq a^2, \forall t$, then
\begin{align}
\norm{\Theta_{t+1} - \Theta^\ast_\rho}_F & \leq \max \left\{ \abs{1-\frac{\zeta_t}{b^2}},\abs{1-\frac{\zeta_t}{a^2}} \right\} \norm{\Theta_t - \Theta^\ast_\rho}_F.
\end{align}
Furthermore,
\begin{enumerate}
\item The step size $\zeta_t$  which yields an optimal worst-case contraction bound $s(\zeta_t)$ is $\zeta = \frac{2}{a^{-2} + b^{-2}}$. 
\item The optimal worst-case contraction bound corresponding to $\zeta = \frac{2}{a^{-2} + b^{-2}}$ is given by
\begin{align*}
s(\zeta) :&=  1 - \frac{2}{1+\frac{b^2}{a^2}} 
\end{align*}
\end{enumerate}
\end{thm}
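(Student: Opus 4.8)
The plan is to view a single \texttt{G-ISTA} step as the composition of the soft-thresholding proximity operator with the gradient step $G_{\zeta_t}(X) := X - \zeta_t\nabla f(X) = X - \zeta_t(S - X^{-1})$, and to exploit that $\Theta^\ast_\rho$ is a fixed point of that composition. First I would write $\Theta_{t+1} = \eta_{\zeta_t\rho}(G_{\zeta_t}(\Theta_t)) = \prox_{\zeta_t g}(G_{\zeta_t}(\Theta_t))$ and, using the optimality characterization quoted earlier (valid for any positive step), $\Theta^\ast_\rho = \prox_{\zeta_t g}(G_{\zeta_t}(\Theta^\ast_\rho))$ for the same $\zeta_t$. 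Since $\prox_{\zeta_t g} = \eta_{\zeta_t\rho}$ is the proximity operator of a convex function it is nonexpansive in Frobenius norm, so
\[
\norm{\Theta_{t+1} - \Theta^\ast_\rho}_F \leq \norm{G_{\zeta_t}(\Theta_t) - G_{\zeta_t}(\Theta^\ast_\rho)}_F,
\]
which reduces the theorem to bounding the gradient-step difference.

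Next I would linearize the matrix inverse exactly. Writing $\Delta := \Theta_t - \Theta^\ast_\rho$ and using the resolvent identity $\Theta_t^{-1} - (\Theta^\ast_\rho)^{-1} = -\Theta_t^{-1}\Delta(\Theta^\ast_\rho)^{-1}$, the $S$ terms cancel and
\[
G_{\zeta_t}(\Theta_t) - G_{\zeta_t}(\Theta^\ast_\rho) = \Delta - \zeta_t\,\Theta_t^{-1}\Delta(\Theta^\ast_\rho)^{-1}.
\]
Vectorizing via $\vect(A X B) = (B^\top\otimes A)\vect(X)$ rewrites this as $\big[I - \zeta_t\,((\Theta^\ast_\rho)^{-1}\otimes\Theta_t^{-1})\big]\vect(\Delta)$, so $\norm{G_{\zeta_t}(\Theta_t) - G_{\zeta_t}(\Theta^\ast_\rho)}_F$ is at most the spectral norm of $I - \zeta_t\,((\Theta^\ast_\rho)^{-1}\otimes\Theta_t^{-1})$ times $\norm{\Delta}_F$. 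The key structural fact is that a Kronecker product of symmetric positive definite matrices is again symmetric positive definite, with eigenvalues equal to the pairwise products of the factors' eigenvalues. Since $aI\preceq\Theta_t,\Theta^\ast_\rho\preceq bI$ gives $\tfrac1b I\preceq \Theta_t^{-1},(\Theta^\ast_\rho)^{-1}\preceq\tfrac1a I$, the eigenvalues of $(\Theta^\ast_\rho)^{-1}\otimes\Theta_t^{-1}$ lie in $[1/b^2,1/a^2]$; hence the symmetric matrix $I - \zeta_t\,((\Theta^\ast_\rho)^{-1}\otimes\Theta_t^{-1})$ has eigenvalues in $[1-\zeta_t/a^2,\,1-\zeta_t/b^2]$ and spectral norm $\max\{\abs{1-\zeta_t/a^2},\abs{1-\zeta_t/b^2}\}$, giving the claimed contraction. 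Here I would read the hypothesis as $\Theta^\ast_\rho$ obeying the same eigenvalue bounds as the iterates, which holds by continuity since it is their limit point.

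I expect the main obstacle to be resisting the temptation to bound $\Delta - \zeta_t\Theta_t^{-1}\Delta(\Theta^\ast_\rho)^{-1}$ by the triangle inequality: submultiplicativity alone yields $\norm{\Delta}_F(1+\zeta_t/a^2)$, a useless factor exceeding $1$. The contraction emerges only once the whole expression is treated as a single symmetric operator $I - \zeta_t K$ acting on $\vect(\Delta)$, so that the identity and the inverse-difference term combine into eigenvalues $1-\zeta_t\lambda$ rather than adding in norm; securing this cancellation cleanly through the Kronecker eigenvalue calculus is the crux of the argument.

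Finally, for the two ``Furthermore'' claims I would minimize the one-variable function $s(\zeta) := \max\{\abs{1-\zeta/a^2},\abs{1-\zeta/b^2}\}$. Each term is V-shaped with vertex at $a^2$ and $b^2$ respectively, and since $a^2<b^2$, between the vertices $\abs{1-\zeta/a^2}=\zeta/a^2-1$ increases while $\abs{1-\zeta/b^2}=1-\zeta/b^2$ decreases; the pointwise maximum is therefore minimized exactly where the two branches cross. Setting $\zeta/a^2 - 1 = 1 - \zeta/b^2$ gives $\zeta(a^{-2}+b^{-2}) = 2$, i.e. $\zeta = 2/(a^{-2}+b^{-2})$, the harmonic mean of $a^2$ and $b^2$. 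Substituting back into $1-\zeta/b^2$ and simplifying yields $s(\zeta) = 1 - 2/(1+b^2/a^2)$, establishing both statements.
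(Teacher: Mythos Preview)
Your argument is correct and reaches the same contraction bound, but the route differs from the paper's in one interesting respect. The paper also starts with nonexpansiveness of the proximity operator, reducing to
\[
\norm{(\Theta_t + \zeta_t \Theta_t^{-1}) - (\Theta^\ast_\rho + \zeta_t (\Theta^\ast_\rho)^{-1})}_F,
\]
but instead of your exact resolvent identity it treats the map $h_\zeta(X) = \vect(X + \zeta X^{-1})$ and applies the mean value inequality along the segment $Z_c = c\Theta_t + (1-c)\Theta^\ast_\rho$. The Jacobian is $J_{h_\zeta}(X) = I_{p^2} - \zeta\, X^{-1}\otimes X^{-1}$, so the Kronecker factors are the \emph{same} positive definite matrix evaluated at a point of the segment; Weyl's inequality then places the eigenvalues of $Z_c$ in $[a,b]$ and the bound follows. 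Your approach sidesteps the mean value theorem entirely: the identity $A^{-1}-B^{-1} = -A^{-1}(A-B)B^{-1}$ gives the Kronecker operator $I - \zeta_t\,(\Theta^\ast_\rho)^{-1}\otimes\Theta_t^{-1}$ directly, with two \emph{different} symmetric positive definite factors. Both Kronecker matrices are symmetric with eigenvalues in $[1/b^2,1/a^2]$, so the spectral bounds coincide. Your version is more elementary (purely algebraic, no calculus on matrix manifolds); the paper's version has the minor advantage that the Jacobian $I - \zeta X^{-1}\otimes X^{-1}$ is the Hessian of $f$, connecting the estimate more transparently to strong convexity and Lipschitz smoothness of $-\log\det$ on the slab. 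One small caveat: you justify $aI\preceq\Theta^\ast_\rho\preceq bI$ by taking a limit of the iterates, which presupposes convergence; the paper instead folds $\Theta^\ast_\rho$ into the definition of $a,b$ from the outset in its lemma, avoiding any circularity. Your optimization of $s(\zeta)$ matches the paper's.
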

\begin{proof}
A direct proof is given in the appendix. Note that linear convergence of proximal gradient methods for strongly convex objective functions in general has already been proven (see Supplemental section). 
\end{proof}

It remains to show that there exist constants $a$ and $b$ which bound the eigenvalues of $\Theta_{t}, \forall t$. The existence of such constants follows directly from Theorem \ref{thm:ISTA_convergence}, as $\Theta_t$ lie in the bounded domain $\{\Theta \in \mathbb{S}_{++}^p \ : \ f(\Theta) + g(\Theta)  < f(\Theta_0) + g(\Theta_0) \}$, for all $t$. However, it is possible to specify the constants $a$ and $b$ to yield an explicit rate; this is done in Theorem \ref{thm:eigenvalue_bounds}. 

\vspace{.5pc}
\begin{thm}
\label{thm:eigenvalue_bounds}
Let $\rho > 0$, define $\alpha$ and $\beta$ as in Lemma \ref{lem:opt_bound}, and assume $\zeta_t \leq \alpha^2, \forall t$. Then the iterates $\Theta_t$ of Algorithm \ref{alg:ISTA} satisfy $\alpha I \preceq \Theta_t \preceq b^\prime I, \forall t$, with $b^\prime = \norm{\Theta^{\ast}_\rho}_2 + \norm{\Theta_0 - \Theta^\ast_\rho}_F \leq \beta + \sqrt{p}(\beta + \alpha)$. 
\end{thm}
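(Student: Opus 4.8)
The plan is to prove the two eigenvalue bounds separately and in a fixed order: first the uniform lower bound $\alpha I \preceq \Theta_t$ by induction on $t$, and only then the upper bound $\Theta_t \preceq b^\prime I$, which I will extract from the contraction of Theorem~\ref{thm:ISTA_convergence}. The ordering is forced: Theorem~\ref{thm:ISTA_convergence} takes the lower bound $a=\alpha$ as a hypothesis, so the lower bound cannot be recovered from it and must be established by a direct spectral argument on the thresholded gradient step.

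For the lower bound I would induct on $t$. The base case $\Theta_0 \succeq \alpha I$ holds for the suggested initialization, since $[\Theta_0]_{ii}=(S_{ii}+\rho)^{-1}$ and $S_{ii}\le \norm{S}_2$ give $\lambda_{\min}(\Theta_0)\ge(\norm{S}_2+\rho)^{-1}\ge\alpha$. For the inductive step, write $\Theta_{t+1}=\eta_{\zeta_t\rho}(G)$ with $G=\Theta_t-\zeta_t S+\zeta_t\Theta_t^{-1}$, and observe that soft-thresholding perturbs $G$ entrywise by at most $\zeta_t\rho$, so $\Theta_{t+1}=G-E$ with $E$ symmetric, $\abs{E_{ij}}\le\zeta_t\rho$, and therefore $\norm{E}_2\le\max_i\sum_j\abs{E_{ij}}\le p\zeta_t\rho$. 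Applying Weyl's inequality twice yields
\[
\lambda_{\min}(\Theta_{t+1})\ge\lambda_{\min}\!\left(\Theta_t+\zeta_t\Theta_t^{-1}\right)-\zeta_t\norm{S}_2-p\zeta_t\rho.
\]
Since $\Theta_t$ and $\zeta_t\Theta_t^{-1}$ are simultaneously diagonalizable, the eigenvalues of $\Theta_t+\zeta_t\Theta_t^{-1}$ are $\mu+\zeta_t/\mu$ as $\mu$ ranges over the eigenvalues of $\Theta_t$; the map $\mu\mapsto\mu+\zeta_t/\mu$ is increasing on $[\sqrt{\zeta_t},\infty)$, and the hypothesis $\zeta_t\le\alpha^2$ ensures every such $\mu\ge\lambda_{\min}(\Theta_t)\ge\alpha\ge\sqrt{\zeta_t}$, so $\lambda_{\min}(\Theta_t+\zeta_t\Theta_t^{-1})\ge\alpha+\zeta_t/\alpha$. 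Substituting and factoring gives $\lambda_{\min}(\Theta_{t+1})\ge\alpha+\zeta_t(\alpha^{-1}-\norm{S}_2-p\rho)$.

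The hard part — and the step where the precise value of $\alpha$ earns its keep — is the final cancellation: because $\alpha^{-1}=\norm{S}_2+p\rho$ exactly, the coefficient of $\zeta_t$ vanishes identically and one recovers $\lambda_{\min}(\Theta_{t+1})\ge\alpha$, closing the induction. I expect the two delicate ingredients to be the thresholding perturbation bound and the monotonicity of $\mu+\zeta_t/\mu$, since the assumption $\zeta_t\le\alpha^2$ (needed for monotonicity) and the exact formula for $\alpha$ (needed for the cancellation) must both hold at once; weakening either destroys the argument.

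With $\alpha I\preceq\Theta_t$ for all $t$ in hand, and recalling $\alpha I\preceq\Theta^\ast_\rho\preceq\beta I$ from Lemma~\ref{lem:opt_bound}, the upper bound follows quickly. As noted after Theorem~\ref{thm:ISTA_convergence}, the iterates lie in a sublevel set on which the eigenvalues are uniformly bounded above by some finite $b\ge\alpha$, so Theorem~\ref{thm:ISTA_convergence} applies with $a=\alpha$ and this $b$; since $\zeta_t\le\alpha^2\le b^2$, its worst-case contraction factor is strictly below $1$, whence $\norm{\Theta_{t+1}-\Theta^\ast_\rho}_F\le\norm{\Theta_t-\Theta^\ast_\rho}_F$ and therefore $\norm{\Theta_t-\Theta^\ast_\rho}_F\le\norm{\Theta_0-\Theta^\ast_\rho}_F$ for all $t$. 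Using $\norm{\cdot}_2\le\norm{\cdot}_F$ and the triangle inequality,
\[
\norm{\Theta_t}_2\le\norm{\Theta^\ast_\rho}_2+\norm{\Theta_t-\Theta^\ast_\rho}_F\le\norm{\Theta^\ast_\rho}_2+\norm{\Theta_0-\Theta^\ast_\rho}_F=b^\prime,
\]
which is exactly $\Theta_t\preceq b^\prime I$. The final simplification $b^\prime\le\beta+\sqrt{p}(\beta+\alpha)$ is then routine bookkeeping: $\norm{\Theta^\ast_\rho}_2\le\beta$ by Lemma~\ref{lem:opt_bound}, and $\norm{\Theta_0-\Theta^\ast_\rho}_F\le\sqrt{p}(\beta+\alpha)$ follows from $\norm{\cdot}_F\le\sqrt{p}\,\norm{\cdot}_2$ together with the eigenvalue confinement of $\Theta_0$ and $\Theta^\ast_\rho$.
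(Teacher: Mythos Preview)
Your proof is correct and follows the paper's strategy: establish $\alpha I \preceq \Theta_t$ by induction, with the key cancellation coming from the exact identity $\alpha^{-1}=\norm{S}_2+p\rho$, and then deduce the upper bound from nonexpansiveness of the Frobenius distance to $\Theta^\ast_\rho$. Two minor differences: your entrywise-plus-row-sum bound $\norm{E}_2\le p\zeta_t\rho$ on the thresholding perturbation is more elementary than the paper's Lemma~\ref{lem:bound_soft} (which decomposes into $\{0,\pm1\}$ matrices and invokes Gershgorin to reach the same $p\epsilon$ bound), and for the upper bound the paper applies the per-iterate contraction Lemma~\ref{lem:ContractionBound} directly---avoiding your detour through a sublevel-set argument to secure a preliminary uniform $b$ before invoking Theorem~\ref{thm:ISTA_convergence}---but these are cosmetic variations on the same argument.
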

\begin{proof}
See the Supplementary section. 
\end{proof}

Importantly, note that the bounds of Theorem \ref{thm:eigenvalue_bounds} depend explicitly on the bound of $\Theta_\rho^\ast$, as given by Lemma \ref{lem:opt_bound}. These eigenvalue bounds on $\Theta_{t+1}$, along with Theorem \ref{thm:ISTA_convergence}, provide a closed form linear convergence rate for Algorithm \ref{alg:ISTA}. This rate depends only on properties of the solution. 
\vspace{.5pc}
\begin{thm}
\label{thm:lin_convergence}
Let $\alpha$ and $\beta$ be as in Lemma \ref{lem:opt_bound}. Then for a constant step size $\zeta_t := \zeta < \alpha^2$, the iterates of Algorithm \ref{alg:ISTA} converge linearly with a rate of 
\begin{align}
\label{eq:shrinkage_constant}
s(\zeta) = 1 - \frac{2\alpha^2}{\alpha^2 + (\beta + \sqrt{p} (\beta - \alpha))^2} <1
\end{align}
\end{thm}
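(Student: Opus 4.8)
The plan is to obtain the explicit rate by composing the two preceding theorems, which together already supply every ingredient. Because the prescribed constant step obeys $\zeta < \alpha^2$, the hypothesis $\zeta_t \le \alpha^2$ of Theorem~\ref{thm:eigenvalue_bounds} holds at every iteration, so that theorem delivers uniform spectral bounds $\alpha I \preceq \Theta_t \preceq b' I$ for all $t$, with $b' = \norm{\Theta^\ast_\rho}_2 + \norm{\Theta_0 - \Theta^\ast_\rho}_F$. Substituting the operator-norm estimate $\norm{\Theta^\ast_\rho}_2 \le \beta$ together with the Frobenius estimate $\norm{\Theta_0 - \Theta^\ast_\rho}_F \le \sqrt{p}\,\norm{\Theta_0 - \Theta^\ast_\rho}_2 \le \sqrt{p}(\beta - \alpha)$ (valid once both matrices have spectra in $[\alpha,\beta]$) gives $b' \le \beta + \sqrt{p}(\beta - \alpha)$, which is exactly the quantity appearing in the stated rate; the essential point is that this bound is independent of $t$, so the iterates stay inside one fixed compact set.

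With $a = \alpha$ and $b = b'$ now fixed, I would apply Theorem~\ref{thm:ISTA_convergence} directly: the standing assumption $\zeta \le \alpha^2 = a^2$ is precisely its hypothesis, so it yields the one-step contraction $\norm{\Theta_{t+1}-\Theta^\ast_\rho}_F \le s\,\norm{\Theta_t - \Theta^\ast_\rho}_F$, and telescoping over $t$ gives geometric (linear) convergence. To recover the closed form, I substitute $a=\alpha$, $b=b'$ into the optimal worst-case constant of part~(2), namely $1 - 2/(1 + b^2/a^2)$, which upon clearing denominators becomes $1 - 2\alpha^2/(\alpha^2 + (b')^2)$. The final inequality $s(\zeta) < 1$ then needs only the one-line observation that $0 < \alpha < b'$ forces the subtracted term to be strictly positive.

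The genuinely delicate point, and the step I expect to demand the most care, is reconciling the step-size regimes of the two theorems. The constant quoted in the statement is the \emph{optimal} worst-case value from part~(2) of Theorem~\ref{thm:ISTA_convergence}, which part~(1) attains only at the balancing step $\zeta = 2/(\alpha^{-2} + (b')^{-2})$. A short computation shows $2/(\alpha^{-2}+(b')^{-2}) > \alpha^2$ whenever $b' > \alpha$, so this optimal step lies \emph{outside} the range $\zeta < \alpha^2$ for which Theorem~\ref{thm:eigenvalue_bounds} guarantees the invariant spectral interval. For any admissible $\zeta \le \alpha^2$ the exact contraction is instead $\max\{|1-\zeta/(b')^2|,\,|1-\zeta/\alpha^2|\} = 1 - \zeta/(b')^2$ (the $\alpha$-term never dominates the maximum in this regime), and its best value $1 - \alpha^2/(b')^2$ is strictly weaker than the quoted $1 - 2\alpha^2/(\alpha^2+(b')^2)$. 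To make the stated rate honest rather than merely optimistic, I would therefore revisit the proof of Theorem~\ref{thm:eigenvalue_bounds} and argue that the soft-thresholding update preserves $\lambda_{\min}(\Theta_{t+1}) \ge \alpha$ for steps all the way up to the balancing value $2/(\alpha^{-2}+(b')^{-2})$ — plausible in light of the paper's own remark that the line search routinely accepts steps far larger than the conservative $\lambda_{\min}(\Theta_t)^2$ bound — after which Theorem~\ref{thm:ISTA_convergence} applies at the optimal step and the quoted constant is genuinely achieved.
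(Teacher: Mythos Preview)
Your approach mirrors the paper's exactly: invoke Theorem~\ref{thm:eigenvalue_bounds} to get uniform spectral bounds $\alpha I \preceq \Theta_t \preceq b' I$ with $b' \le \beta + \sqrt{p}(\beta-\alpha)$ (the paper adds only the remark that one may take, e.g., $\Theta_0 = (S+\rho I)^{-1}$ so that $\alpha I \preceq \Theta_0 \preceq \beta I$ and hence $\norm{\Theta_0 - \Theta^\ast_\rho}_F \le \sqrt{p}(\beta-\alpha)$), and then conclude by appealing to Theorem~\ref{thm:ISTA_convergence}. That is literally the paper's proof; it ends with ``the result follows from Theorem~\ref{thm:ISTA_convergence}.''

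Where you go beyond the paper is in the step-size reconciliation, and your observation is correct. The displayed constant is the optimal worst-case bound from part~(2) of Theorem~\ref{thm:ISTA_convergence}, attained only at the balancing step $\zeta = 2/(\alpha^{-2}+(b')^{-2})$, which indeed exceeds $\alpha^2$ whenever $b' > \alpha$. Under the stated hypothesis $\zeta < \alpha^2$, both terms in the maximum are nonnegative and the contraction factor from Theorem~\ref{thm:ISTA_convergence} is $1-\zeta/(b')^2$, whose infimum $1-\alpha^2/(b')^2$ is strictly larger than the quoted $1 - 2\alpha^2/(\alpha^2 + (b')^2)$. The paper's proof does not address this point at all; it simply invokes Theorem~\ref{thm:ISTA_convergence} without checking that the optimal step is admissible for Theorem~\ref{thm:eigenvalue_bounds}. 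So you have identified a genuine gap that the paper itself glosses over. Your proposed fix---to sharpen the eigenvalue-preservation argument so that it covers steps up to the balancing value---is the natural route, but nothing in the paper (including the supplementary proofs of Lemmas~\ref{lemma:general_bounds}--\ref{lem:bnd_beta}) supplies it; the stated rate should therefore be read as the target one obtains by formally plugging $a=\alpha$, $b=b'$ into part~(2), not as something the written proof rigorously delivers under $\zeta < \alpha^2$.
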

\begin{proof}
By Theorem \ref{thm:eigenvalue_bounds}, for $\zeta < \alpha^2$, the iterates $\Theta_{t}$ satisfy
\[
\alpha I \preceq \Theta_t \preceq \left( \norm{\Theta^\ast_\rho}_2 + \norm{\Theta_0 - \Theta^\ast_\rho}_F\right) I
\]
for all $t$. Moreover, since $\alpha I \preceq \Theta^{\ast} \preceq \beta I$, if $\alpha I \preceq \Theta_0 \preceq \beta I$ (for instance, by taking $\Theta_0 = (S+\rho I)^{-1}$ or some multiple of the identity) then this can be bounded as:
\begin{alignat}{1}
\norm{\Theta^\ast_\rho}_2 + \norm{\Theta_0 - \Theta^\ast_\rho}_F & \leq \beta + \sqrt{p} \norm{\Theta_0 - \Theta^\ast_\rho}_2 \\
& \leq \beta + \sqrt{p} (\beta - \alpha).
\end{alignat}
Therefore, 
\begin{alignat}{1}
\alpha I & \preceq \Theta_t \preceq \left( \beta + \sqrt{p} (\beta - \alpha) \right) I,
\end{alignat}
and the result follows from Theorem \ref{thm:ISTA_convergence}. 
\end{proof}

\medskip

\begin{remark}
Note that the contraction constant (equation \ref{eq:shrinkage_constant}) of Theorem \ref{thm:lin_convergence} is closely related to the condition number of $\Theta^{\ast}_\rho$,
\[
\kappa(\Theta^{\ast}_\rho) = \frac{\lambda_{\mbox{max}}(\Theta^{\ast}_\rho)}{\lambda_{\mbox{min}}(\Theta^{\ast}_\rho)} \leq \frac{\beta}{\alpha}
\]
as 
\begin{alignat}{1}
1 - \frac{2\alpha^2}{\alpha^2 + (\beta + \sqrt{p} (\beta - \alpha))^2} & \geq 1 - \frac{2\alpha^2}{\alpha^2  + \beta^2} \geq 1 - 2\kappa(\Theta^{\ast}_\rho)^{-2}.
\end{alignat}
Therefore, the worst case bound becomes close to 1 as the conditioning number of $\Theta^{\ast}_\rho$ increases. 
\end{remark}

It is important to compare the above convergence results to those that have been recently established. In particular, the useful, recent \texttt{QUIC} method \cite{Hsieh2011_QUIC} warrants a discussion. As soon as the sign pattern of its iterates match that of the true optimum, the non-smooth problem becomes effectively smooth and the \texttt{QUIC} algorithm reduces to a Newton method. At this point, \texttt{QUIC} converges quadratically; however, this is a very local property, and no overall complexity bounds have been specified for \texttt{QUIC}. This can be contrasted with our results, which take advantage of existing bounds on the optimal solution to yield global convergence (i.e., that we can always specify a starting point which meets our conditions). We also note that convergence rates have not been established for the \texttt{glasso}.  However, \texttt{QUIC} and \texttt{glasso} can all be very fast in appropriate settings. Each brings a useful addition to the literature by taking advantage of different structural elements (block structure for \texttt{glasso}, second order approaches for \texttt{QUIC}, and conditioning bounds for \texttt{G-ISTA}). We feel there is no silver bullet; each method outperforms the others in certain settings.

\section{Numerical Results}
\label{sec:results}

In this section, we provide numerical results for the \texttt{G-ISTA} algorithm. In Section \ref{subsec:conditioning}, the theoretical results of Section \ref{sec:convergence} are demonstrated. Section \ref{subsec:timing} compares running times of the \texttt{G-ISTA}, \texttt{glasso} \cite{Friedman08}, and \texttt{QUIC} \cite{Hsieh2011_QUIC} algorithms. All algorithms were implemented in C++, and run on an Intel $i7-2600k$ $3.40\mbox{GHz} \times 8$ core with $16$ GB of RAM. 

\subsection{Synthetic Datasets}
\label{subsec:synthetic_dataset}
Synthetic data for this section was generated following the method used by \cite{Lu2010, MazumderPrimal2011}. For a fixed $p$, a $p$ dimensional inverse covariance matrix $\Omega$ was generated with off-diagonal entries drawn $\emph{i.i.d}$ from a $\mbox{uniform}(-1,1)$ distribution. These entries were set to zero with some fixed probability (in this case, either $0.97$ or $0.85$ to simulate a very sparse and a somewhat sparse model). Finally, a multiple of the identity was added to the resulting matrix so that the smallest eigenvalue was equal to $1$. In this way, $\Omega$ was insured to be sparse, positive definite, and well-conditioned. Datsets of $n$ samples were then generated by drawing \emph{i.i.d.} samples from a $\mathcal{N}_p(0, \Omega^{-1})$ distribution. For each value of $p$ and sparsity level of $\Omega$, $n = 1.2 p$ and $n = 0.2p$ were tested, to represent both the $n <p$ and $n>p$ cases. 

{
\begin{table}[h]
\small
\begin{center}
\begin{tabular}{|c|c|c|c|c|c|}
\cline{2-6} 
\multicolumn{1}{c|}{} & $\rho$ & $0.03$ & $0.06$ & $0.09$ & $0.12$\tabularnewline
\hline 
\textbf{problem} & \textbf{algorithm} & \textbf{time/iter} & \textbf{time/iter} & \textbf{time/iter} & \textbf{time/iter}\tabularnewline
\hline 
\hline 
 & $\mbox{nnz}(\Omega_{\rho}^{\ast})$/$\kappa(\Omega_{\rho}^{\ast})$ & $27.65\%/48.14$ & $15.08\%/20.14$ & $7.24\%/7.25$ & $2.39\%/2.32$\tabularnewline
\cline{2-6} 
$p=2000$ & \texttt{glasso} & $1977.92/11$ & $831.69/8$ & $604.42/7$ & $401.59/5$\tabularnewline
\cline{2-6} 
$n=400$ & \texttt{QUIC} & $1481.80/21$ & $257.97/11$ & $68.49/8$ & $15.25/6$\tabularnewline
\cline{2-6} 
$\mbox{nnz}(\Omega)=3\%$ & \texttt{G-ISTA} & $\mathbf{145.60}/437$ & $\mathbf{27.05}/9$ & $\mathbf{8.05}/27$ & $\mathbf{3.19}/12$\tabularnewline
\hline 
 & $\mbox{nnz}(\Omega_{\rho}^{\ast})$/$\kappa(\Omega_{\rho}^{\ast})$ & $14.56\%/10.25$ & $3.11\%/2.82$ & $0.91\%/1.51$ & $0.11\%/1.18$\tabularnewline
\cline{2-6} 
$p=2000$ & \texttt{glasso} & $667.29/7$ & $490.90/6$ & $318.24/4$ & $233.94/3$\tabularnewline
\cline{2-6} 
$n=2400$ & \texttt{QUIC} & $211.29/10$ & $24.98/7$ & $5.16/5$ & $\mathbf{1.56}/4$\tabularnewline
\cline{2-6} 
$\mbox{nnz}(\Omega)=3\%$ & \texttt{G-ISTA} & $\mathbf{14.09}/47$ & $\mathbf{3.51}/13$ & $\mathbf{2.72}/10$ & $2.20/8$\tabularnewline
\hline 
 & $\mbox{nnz}(\Omega_{\rho}^{\ast})$/$\kappa(\Omega_{\rho}^{\ast})$ & $27.35\%/64.22$ & $15.20\%/28.50$ & $7.87\%/11.88$ & $2.94\%/2.87$\tabularnewline
\cline{2-6} 
$p=2000$ & \texttt{glasso} & $2163.33/11$ & $862.39/8$ & $616.81/7$ & $48.47/7$\tabularnewline
\cline{2-6} 
$n=400$ & \texttt{QUIC} & $1496.98/21$ & $318.57/12$ & $96.25/9$ & $23.62/7$\tabularnewline
\cline{2-6} 
$\mbox{nnz}(\Omega)=15\%$ & \texttt{G-ISTA} & $\mathbf{251.51}/714$ & $\mathbf{47.35}/148$ & $\mathbf{7.96}/28$ & $\mathbf{3.18}/12$\tabularnewline
\hline 
 & $\mbox{nnz}(\Omega_{\rho}^{\ast})$/$\kappa(\Omega_{\rho}^{\ast})$ & $19.98\%/17.72$ & $5.49\%/4.03$ & $65.47\%/1.36$ & $0.03\%/1.09$\tabularnewline
\cline{2-6} 
$p=2000$ & \texttt{glasso} & $708.15/6$ & $507.04/6$ & $313.88/4$ & $233.16/3$\tabularnewline
\cline{2-6} 
$n=2400$ & \texttt{QUIC} & $301.35/10$ & $491.54/17$ & $4.12/5$ & $1.34/4$\tabularnewline
\cline{2-6} 
$\mbox{nnz}(\Omega)=15\%$ & \texttt{G-ISTA} & $\mathbf{28.23}/88$ & $\mathbf{4.08}/16$ & $\mathbf{1.95}/7$ & $\mathbf{1.13}/4$\tabularnewline
\hline 
\end{tabular}
\end{center}
\caption{Timing comparisons for $p=2000$ dimensional datasets, generated as in Section \ref{subsec:synthetic_dataset}. Above, $\mbox{nnz}(A)$ is the percentage of nonzero elements of matrix $A$. }
\label{fig:timing_synth}
\end{table}
}

\subsection{Demonstration of Convergence Rates}
\label{subsec:conditioning}

The linear convergence rate derived for \texttt{G-ISTA} in Section \ref{sec:convergence} was shown to be heavily dependent on the conditioning of the final estimator. To demonstrate these results, \texttt{G-ISTA}  was run on a synthetic dataset, as described in Section \ref{subsec:synthetic_dataset}, with $p = 500$ and $n = 300$. Regularization parameters of  $\rho = 0.75$, 0.1, 0.125, 0.15, and 0.175 were used. Note as $\rho$ increases, $\Theta_\rho^{\ast}$ generally becomes better conditioned. For each value of $\rho$, the numerical optimum was computed to a duality gap of $10^{-10}$ using \texttt{G-ISTA}. These values of $\rho$ resulted in sparsity levels of $81.80\%$, $89.67\%$, $94.97\%$, $97.82\%$, and $99.11\%$, respectively. \texttt{G-ISTA} was then run again, and the Frobenius norm argument errors at each iteration were stored.  These errors were plotted on a log scale for each value of $\rho$ to demonstrate the dependence of the convergence rate on condition number. See Figure \ref{fig:conditioning}, which clearly demonstrates the effects of conditioning.

\begin{figure}[h]
\begin{center}
\includegraphics[width=5.0in]{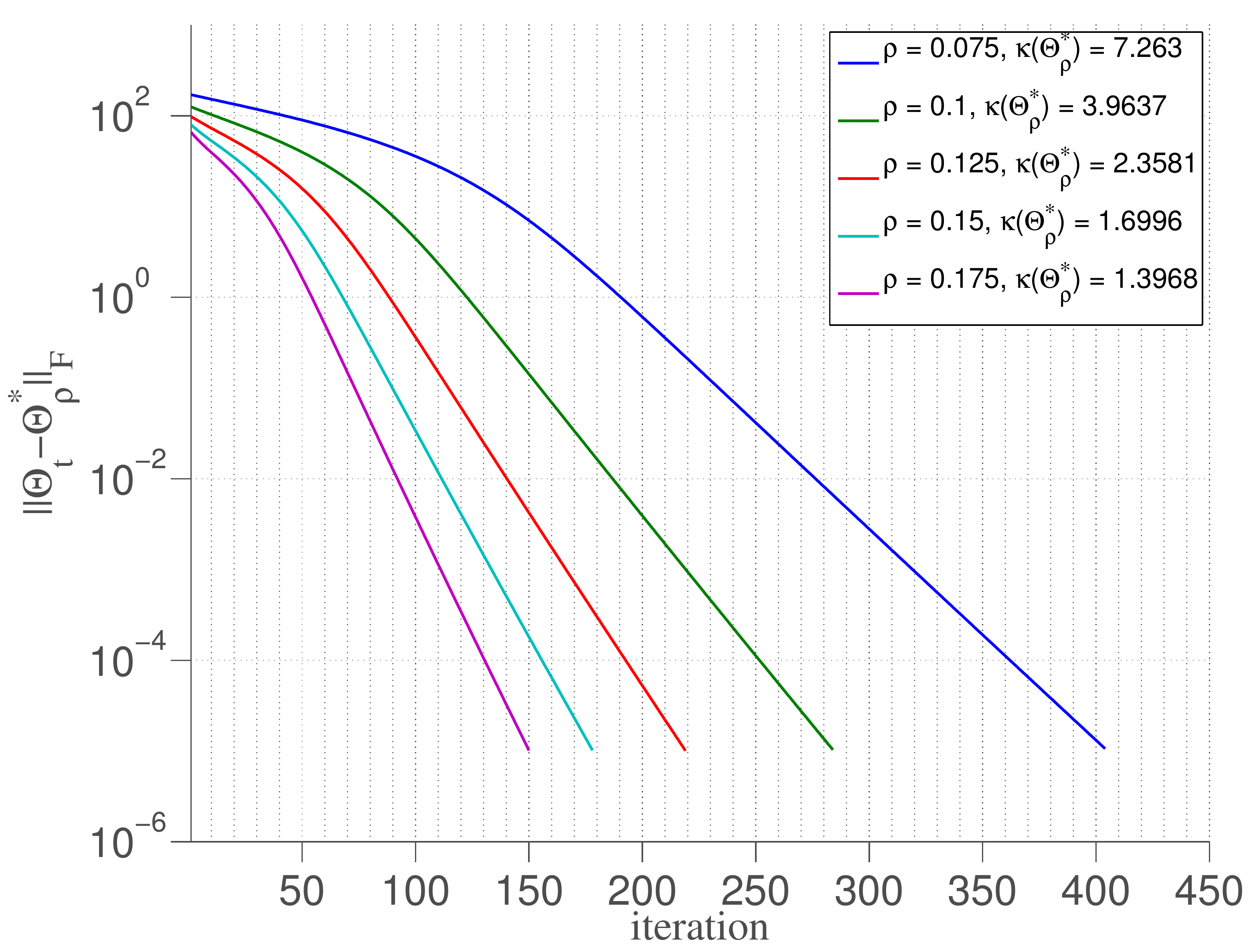}
\end{center}
\caption{Semilog plot of $\norm{\Theta_t - \Theta^{\ast}_\rho}_F$ vs. iteration number $t$, demonstrating linear convergence rates of \texttt{G-ISTA}, and dependence of those rates on $\kappa(\Theta^{\ast}_\rho)$.}
\label{fig:conditioning}
\end{figure}

\subsection{Timing Comparisons}
\label{subsec:timing}

The \texttt{G-ISTA}, \texttt{glasso}, and \texttt{QUIC} algorithms were run on synthetic datasets (real datasets are presented in the Supplemental section) of varying $p,n$ and with different levels of regularization, $\rho$. All algorithms were run to ensure a fixed duality gap, here taken to be $10^{-5}$. This comparison used efficient C++ implementations of each of the three algorithms investigated. The implementation of \texttt{G-ISTA} was adapted from the publicly available C++ implementation of \texttt{QUIC} \citet{Hsieh2011_QUIC}. Running times were recorded and are presented in Table \ref{fig:timing_synth}. Further comparisons are presented in the Supplementary section. 

\medskip

\begin{remark}
The three algorithms variable ability to take advantage of multiple processors is an important detail. The times presented in Table \ref{fig:timing_synth} are wall times, not CPU times. The comparisons were run on a multicore processor, and it is important to note that the Cholesky decompositions and inversions required by both \texttt{G-ISTA} and \texttt{QUIC} take advantage of multiple cores. On the other hand, the $p^2$ dimensional lasso solve of \texttt{QUIC} and $p$-dimensional lasso solve of \texttt{glasso} do not. For this reason, and because Cholesky factorizations and inversions make up the bulk of the computation required by \texttt{G-ISTA}, the CPU time of \texttt{G-ISTA} was typically greater than its wall time by a factor of roughly 4. The CPU and wall times of \texttt{QUIC} were more similar; the same applies to \texttt{glasso}.
\end{remark}

\section{Conclusion}
\label{sec:conclusion}

In this paper, a proximal gradient method was applied to the sparse inverse covariance problem. Linear convergence was discussed, with a fixed closed-form rate. Numerical results have also been presented, comparing \texttt{G-ISTA} to the widely-used \texttt{glasso} algorithm and the newer, but very fast, \texttt{QUIC} algorithm. These results indicate that \texttt{G-ISTA} is competitive, in particular for values of $\rho$ which yield sparse, well-conditioned estimators. The \texttt{G-ISTA} algorithm was very fast on the synthetic examples of Section \ref{subsec:timing}, which were generated from well-conditioned models. For poorly conditioned models, \texttt{QUIC} is very competitive. The Supplemental section gives two real datasets which demonstrate this. For many practical applications however, obtaining an estimator that is well-conditioned is important (\cite{Rolfs12, Won2012Condition}). To conclude, although second-order methods for the sparse inverse covariance estimation problem have recently been shown to perform well, simple first-order methods cannot be ruled out, as they can also be very competitive in many cases. 

\medskip

{\it Acknowledgments:} Dominique Guillot was supported in part by the National Science Foundation under Grant No. DMS-1106642. Bala Rajaratnam was supported in part by the National Science Foundation under Grant Nos. DMS-0906392 (ARRA), DMS-CMG-1025465, AGS-1003823, DMS-1106642 and grants NSA H98230-11-1-0194, DARPA-YFA N66001-11-1-4131, and SUWIEVP10-SUFSC10-SMSCVISG0906. Benjamin Rolfs was supported in part by the Department of Energy Office of Science Graduate Fellowship Program DE-AC05-06OR23100 (ARRA) and NSF grant AGS-1003823. 

\newpage
\bibliographystyle{plainnat}
\bibliography{ista_refs}

\newpage
\newpage
\appendix
\section{Supplementary material}
\subsection{Lipschitz Continuity of $\nabla f(X)$}
\begin{lemma}
\label{lem:lipschitz_grad}
For any $X,Y \in \mathbb{S}_{++}^{p}$,
\[
\frac{1}{b^2} \norm{X-Y}_2 \leq \norm{X^{-1}-Y^{-1}}_2 \leq \frac{1}{a^2} \norm{X - Y}_2,
\]
where $a= \min\{\lambda_{\min}(X), \lambda_{\min}(Y)\}$ and $b = \max\{\lambda_{\max}(X), \lambda_{\max}(Y)\}$. 
\end{lemma}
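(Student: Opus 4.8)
The plan is to reduce everything to the second-resolvent identity
\[
X^{-1} - Y^{-1} = X^{-1}(Y-X)Y^{-1},
\]
valid for any invertible $X,Y$ and verified by left-multiplying by $X$ and right-multiplying by $Y$. Combined with submultiplicativity of the spectral norm, this single identity yields both inequalities: I would prove the upper bound directly, then obtain the lower bound for free by applying the same estimate to the pair $(X^{-1},Y^{-1})$ and exploiting the symmetry of the statement under inversion.

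For the upper bound, take spectral norms in the resolvent identity and apply submultiplicativity together with $\norm{Y-X}_2 = \norm{X-Y}_2$:
\[
\norm{X^{-1}-Y^{-1}}_2 \leq \norm{X^{-1}}_2 \, \norm{X-Y}_2 \, \norm{Y^{-1}}_2.
\]
Since $X,Y \in \mathbb{S}_{++}^p$ are symmetric positive definite, $\norm{X^{-1}}_2 = \lambda_{\min}(X)^{-1} \leq a^{-1}$ and likewise $\norm{Y^{-1}}_2 \leq a^{-1}$, where $a = \min\{\lambda_{\min}(X),\lambda_{\min}(Y)\}$. This gives $\norm{X^{-1}-Y^{-1}}_2 \leq a^{-2}\norm{X-Y}_2$, the right-hand inequality.

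For the lower bound, rather than estimate $X-Y$ afresh, I would apply the upper-bound estimate just proved to the positive definite matrices $X^{-1}$ and $Y^{-1}$:
\[
\norm{X-Y}_2 = \norm{(X^{-1})^{-1} - (Y^{-1})^{-1}}_2 \leq \tilde{a}^{-2}\,\norm{X^{-1}-Y^{-1}}_2,
\]
where $\tilde{a} = \min\{\lambda_{\min}(X^{-1}),\lambda_{\min}(Y^{-1})\}$. Using $\lambda_{\min}(X^{-1}) = \lambda_{\max}(X)^{-1}$ and $\lambda_{\min}(Y^{-1}) = \lambda_{\max}(Y)^{-1}$, one finds $\tilde{a} = \max\{\lambda_{\max}(X),\lambda_{\max}(Y)\}^{-1} = b^{-1}$, so $\norm{X-Y}_2 \leq b^2\,\norm{X^{-1}-Y^{-1}}_2$, which rearranges to the left-hand inequality.

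I do not expect a serious obstacle here. The only points requiring care are the elementary spectral facts that for a symmetric positive definite matrix the spectral norm of its inverse equals the reciprocal of its smallest eigenvalue, and the bookkeeping that turns a $\min$ over inverse eigenvalues into a $\max$ over the original eigenvalues (hence $a$ for the upper bound and $b$ for the lower). The idea most worth emphasizing is the inversion-duality trick, which lets the lower bound be deduced from the upper bound rather than argued independently.
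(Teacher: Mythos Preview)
Your proof is correct and follows essentially the same approach as the paper: both use the resolvent identity $X^{-1}-Y^{-1}=X^{-1}(Y-X)Y^{-1}$ together with submultiplicativity of the spectral norm. The only cosmetic difference is that for the lower bound the paper writes out the companion identity $Y-X=X(X^{-1}-Y^{-1})Y$ directly, whereas you obtain it by applying the already-proved upper bound to the pair $(X^{-1},Y^{-1})$; unwinding your step recovers exactly the paper's identity, so the two arguments coincide.
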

\begin{proof}
To prove the right-hand side inequality, notice that 
\[
X^{-1}-Y^{-1} = X^{-1}(Y-X)Y^{-1}. 
\]
Thus, 
\begin{eqnarray*}
\norm{X^{-1}-Y^{-1}}_2 &=& \norm{X^{-1}(Y-X)Y^{-1}}_2 \\
&\leq&  \norm{X^{-1}}_2 \norm{X-Y}_2 \norm{Y^{-1}}_2 \\
&=& \lambda_{\textrm{max}}(X^{-1})  \lambda_{\textrm{max}}(Y^{-1}) \norm{X-Y}_2 \\
&=& \frac{1}{\lambda_{\textrm{min}}(X)} \frac{1}{\lambda_{\textrm{min}}(Y)} \norm{X-Y}_2 \\
&\leq& \frac{1}{a^2} \norm{X-Y}_2. 
\end{eqnarray*}
To prove the left inequality, note first that
\[
Y-X = X(X^{-1}-Y^{-1})Y. 
\]
Therefore, 
\begin{eqnarray*}
\norm{X-Y}_2 &=& \norm{X(X^{-1}-Y^{-1})Y}_2 \\
&\leq&  \norm{X}_2 \norm{X^{-1}-Y^{-1}}_2 \norm{Y}_2 \\
&=& \lambda_{\textrm{max}}(X)  \lambda_{\textrm{max}}(Y) \norm{X^{-1}-Y^{-1}}_2 \\
&\leq& b^2 \norm{X^{-1}-Y^{-1}}_2. 
\end{eqnarray*}
This shows that 
\[
\norm{X^{-1}-Y^{-1}}_2 \geq \frac{1}{b^2} \norm{X-Y}_2
\]
and concludes the proof. 
\end{proof}

The function $\nabla f(X) = S - X^{-1}$ is Lipschitz continuous on any compact domain, since for $X,Y\in \mathbb{S}^p_{++}$ such that $a I \preceq X,Y \preceq b I$,
\begin{align*}
\norm{\nabla f(X) - \nabla f(Y)}_F & = \norm{X^{-1} - Y^{-1}}_F  \\
&\leq \sqrt{p} \norm{X^{-1}-Y^{-1}}_2 \\
& \leq \frac{\sqrt{p}}{a^2} \norm{X-Y}_2  \\
& \leq \frac{\sqrt{p}}{a^2} \norm{X-Y}_F.
\end{align*}

\subsection{Proof of Theorem \ref{thm:ISTA_convergence}}
We now provide the proof of Theorem 1. 

\begin{lemma}
\label{lem:ContractionBound}
Let $\Theta_{t}$ be as in Algorithm \ref{alg:ISTA} and let $\Theta^{\ast}_\rho$ be the optimal point of problem \eqref{eq:sparse_estimator}. Also, define
\begin{alignat*}{2}
b & := \max \left\{\lambda_{\mbox{max}}(\Theta_t), \lambda_{\mbox{max}}(\Theta^{\ast}_\rho)\right\}, \ \ \ a & := \min \left\{\lambda_{\mbox{min}}(\Theta_t), \lambda_{\mbox{min}}(\Theta^{\ast}_\rho)\right\}.
\end{alignat*}
Then
\begin{alignat*}{1}
\norm{\Theta_{t+1} - \Theta^{\ast}_\rho}_F & \leq \max \left\{ \abs{1-\frac{\zeta_t}{b^2}},\abs{1-\frac{\zeta_t}{a^2}} \right\} \norm{\Theta_{t} - \Theta^{\ast}_\rho}_F.
\end{alignat*}
\end{lemma}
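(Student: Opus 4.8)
The plan is to combine the nonexpansiveness of the proximity operator with a spectral bound on the linearized gradient step. First I would invoke the fixed-point characterization of the optimum: since $\Theta^{\ast}_\rho$ solves Problem \eqref{eq:sparse_estimator}, it satisfies $\Theta^{\ast}_\rho = \eta_{\zeta_t\rho}(\Theta^{\ast}_\rho - \zeta_t\nabla f(\Theta^{\ast}_\rho))$ for every $\zeta_t>0$, where $\nabla f(X)=S-X^{-1}$. The update in Algorithm \ref{alg:ISTA} is $\Theta_{t+1}=\eta_{\zeta_t\rho}(\Theta_t-\zeta_t\nabla f(\Theta_t))$. Because $\eta_{\zeta_t\rho}=\prox_{\zeta_t g}$ is the proximity operator of the convex function $g=\rho\norm{\cdot}_1$, it is firmly nonexpansive and in particular $1$-Lipschitz in the Frobenius norm. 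Subtracting the two displays and applying nonexpansiveness gives
\[
\norm{\Theta_{t+1}-\Theta^{\ast}_\rho}_F \le \norm{(\Theta_t-\Theta^{\ast}_\rho)-\zeta_t\bigl(\nabla f(\Theta_t)-\nabla f(\Theta^{\ast}_\rho)\bigr)}_F,
\]
so the problem reduces to bounding the right-hand side.

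Next I would linearize the gradient difference. Writing $D:=\Theta_t-\Theta^{\ast}_\rho$ and using the Hessian of $f$, which acts on a symmetric direction $H$ by $\nabla^2 f(X)[H]=X^{-1}HX^{-1}$, the fundamental theorem of calculus yields $\nabla f(\Theta_t)-\nabla f(\Theta^{\ast}_\rho)=\mathcal{H}[D]$, where $\mathcal{H}:=\int_0^1 \nabla^2 f(\Theta_\tau)\,d\tau$ and $\Theta_\tau:=\Theta^{\ast}_\rho+\tau D$. Hence the right-hand side becomes $\norm{(I-\zeta_t\mathcal{H})[D]}_F$, with $I$ the identity operator on symmetric matrices. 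The operator $H\mapsto X^{-1}HX^{-1}$ is self-adjoint and positive definite with respect to the trace inner product, so $\mathcal{H}$ inherits these properties.

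It then remains to control the spectrum of $\mathcal{H}$. The eigenvalues of $H\mapsto X^{-1}HX^{-1}$ are the products $\lambda_i(X)^{-1}\lambda_j(X)^{-1}$, so if $aI\preceq X\preceq bI$ they lie in $[1/b^2,1/a^2]$. Since $\{X:aI\preceq X\preceq bI\}$ is convex and contains both $\Theta_t$ and $\Theta^{\ast}_\rho$ by the definitions of $a$ and $b$, each $\Theta_\tau$ lies in this set; averaging over $\tau$ shows every eigenvalue of $\mathcal{H}$ also lies in $[1/b^2,1/a^2]$. Consequently $I-\zeta_t\mathcal{H}$ is self-adjoint with eigenvalues in $[1-\zeta_t/a^2,\,1-\zeta_t/b^2]$, and its Frobenius operator norm equals $\max\{\abs{1-\zeta_t/a^2},\abs{1-\zeta_t/b^2}\}$. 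Applying this operator-norm bound to $D$ and chaining with the first inequality produces exactly the claimed contraction. I expect the main obstacle to be the middle step: justifying the Hessian-integral representation of the gradient difference and verifying that the averaged operator $\mathcal{H}$ is genuinely self-adjoint and positive definite with eigenvalues confined to $[1/b^2,1/a^2]$, so that passing from eigenvalues to the Frobenius operator norm is legitimate. Everything else—nonexpansiveness of $\prox$ and convexity of the eigenvalue band—is routine.
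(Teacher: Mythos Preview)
Your proposal is correct and follows essentially the same route as the paper: nonexpansiveness of the prox reduces the question to the gradient step, and a mean-value argument along the segment between $\Theta_t$ and $\Theta^{\ast}_\rho$ bounds that step via the spectrum of the Hessian, which lies in $[1/b^2,1/a^2]$ by convexity of the eigenvalue band. The only cosmetic differences are that the paper vectorizes and writes the Jacobian as $I_{p^2}-\zeta_t Z^{-1}\otimes Z^{-1}$ with a supremum-form mean value inequality, whereas you stay in operator language on symmetric matrices and use the integral form of the fundamental theorem of calculus; both lead to the identical eigenvalue bound and final contraction factor.
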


\begin{proof}
By construction in Algorithm \ref{alg:ISTA}, 
\begin{alignat*}{1}
\Theta_{t+1} & = \eta_{\zeta_t \rho}\left((\Theta_{t} - \zeta_t(S - \Theta_{t}^{-1})\right)
\end{alignat*} 
Moreover, as $\Theta^{\ast}_\rho$ is a fixed point of the ISTA iteration \cite[Prop. 3.1]{Combettes2005}, it satisfies
\begin{alignat*}{1}
\Theta^{\ast}_\rho & = \eta_{\zeta_t \rho}\left(\Theta^{\ast}_\rho - \zeta_t(S - (\Theta^{\ast}_\rho)^{-1})\right). 
\end{alignat*}
The soft-thresholding operator $\eta_{\rho}(\cdot)$ is a proximity operator corresponding to $\rho \norm{\cdot}_{1}$. Since prox operators are non-expansive \cite[Lemma 2.2]{Combettes2005}, it follows that:
\begin{alignat*}{1}
\norm{\Theta_{t+1} - \Theta^{\ast}_\rho}_{F} & = \norm{\eta_{\zeta_t \rho}\left(\Theta_{t} - \zeta_t(S -\Theta_{t}^{-1})\right) - \eta_{\zeta_t \rho}\left(\Theta^{\ast}_\rho - \zeta_t(S -(\Theta^{\ast}_\rho)^{-1})\right) }_F \\
& \leq \norm{\Theta_{t} - \zeta_t(S - \Theta_{t}^{-1} ) - \left( \Theta^{\ast}_\rho - \zeta_t(S - (\Theta^{\ast}_\rho)^{-1}) \right) }_F  \\
& = \norm{(\Theta_{t} + \zeta_t \Theta_{t}^{-1}) - \left(\Theta^{\ast}_\rho + \zeta_t (\Theta^{\ast}_\rho)^{-1} \right) }_F 
\end{alignat*}

To bound the latter expression, recall that if $h : U \subset \mathbb{R}^{n} \rightarrow \mathbb{R}^{m}$ is a differentiable mapping, with $x,y\in U$, and $cx + (1-c)y\in U$ for all $c\in[0,1]$, then
\begin{alignat*}{1}
\norm{h(x)-h(y)} \leq \sup_{c\in[0,1]} \left\{ \norm{J_{h} \left(cx + (1-c)y \right)} \norm{x-y} \right\}
\end{alignat*} 
where $J_h(\cdot)$ is the Jacobian of $h$.  Define $h_{\gamma} : \mathbb{S}_{++}^{p} \rightarrow \mathbb{R}^{p^2}$ by 
\begin{alignat*}{1}
h_{\gamma}(X) & = \mbox{vec}(X) + \mbox{vec}(\gamma X^{-1}),
\end{alignat*}
where $\mbox{vec}(\cdot) : \mathbb{R}^{p\times p} \rightarrow \mathbb{R}^{p^2}$ is the vectorization operator defined by 
\begin{alignat*}{1}
\mbox{vec}(A) &= \left(A_{1,}, A_{2,}, \dots , A_{p,}\right)^T
\end{alignat*}
with $A_{i,}$ the $i^{th}$ row of $A$. Note that for $X\in \mathbb{S}_{++}^{p}$, 
\begin{alignat*}{1}
\frac{\partial X}{\partial X} & = I_{p^2} \quad \mbox{and} \quad \frac{\partial X^{-1}}{\partial X}  = - X^{-1} \otimes X^{-1},
\end{alignat*} 
where $\otimes$ is the Kronecker product and $I_{p^2}$ is the $p^2 \times p^2$ identity matrix. Then the Jacobian of $h_{\gamma}$ is given by:
\begin{alignat*}{1}
J_{h_{\gamma}}(X)  & = I_{p^2} - \gamma X^{-1} \otimes X^{-1}.
\end{alignat*}

Application of the mean value theorem to $h_{\zeta_t}$  over $Z_{t,c} =\mbox{vec}(c\Theta_{t} +(1-c)\Theta^{\ast}_\rho), \ c\in[0,1]$ yields
\begin{alignat*}{1}
\norm{h_{\zeta_{t}}(\Theta_{t}) - h_{\zeta_{t}}(\Theta^{\ast}_\rho)}_F & \leq \sup_{c} \left\{ \norm{I_{p^2} - \zeta_{t} Z_{t,c}^{-1} \otimes Z_{t,c}^{-1}}_2 \right\} \norm{ \mbox{vec}(\Theta_{t}) - \mbox{vec}(\Theta^{\ast}_\rho)}_2 \\
& = \sup_{c} \left\{ \norm{I_{p^2} - \zeta_{t} Z_{t,c}^{-1} \otimes Z_{t,c}^{-1}}_2 \right\} \norm{ \Theta_{t} - \Theta^{\ast}_\rho}_F  .
\end{alignat*}

Denoting the eigenvalues of $Z_{t,c}$ for given values of $t$ and $c$ as $0<\gamma_1\leq\gamma_2\leq \dots \leq \gamma_p$, the eigenvalues of $I_{p^2} - \zeta_{t} Z_{t,c}^{-1} \otimes Z_{t,c}^{-1}$ are $\left\{1 - \zeta_{t}(\gamma_i \gamma_j)^{-1}\right\}_{i,j=1}^{p}$. By Weyl's inequality,
\begin{alignat*}{2}
\gamma_p & = \lambda_{\max}(Z_{t,c}) & \leq \max \left\{ \lambda_{\max}(\Theta_{t}), \lambda_{\max}(\Theta^{\ast}_\rho)\right\} \\
\gamma_1 & = \lambda_{\min}(Z_{t,c}) & \geq \min\left\{ \lambda_{\min}(\Theta_{t}), \lambda_{\min}(\Theta^{\ast}_\rho)\right\},
\end{alignat*}
and therefore 
\begin{alignat*}{2}
\lambda_{\min}\left(I_{p^2} - \zeta_{t} Z_{t,c}^{-1} \otimes Z_{t,c}^{-1}\right) & = 1-\frac{\zeta_t}{\gamma_{1}^2} & \geq 1 - \frac{\zeta_{t}}{a^2} \\
\lambda_{\max}\left(I_{p^2} - \zeta_{t} Z_{t,c}^{-1} \otimes Z_{t,c}^{-1}\right) & = 1-\frac{\zeta_t}{\gamma_{p}^2} & \leq 1 - \frac{\zeta_{t}}{b^2}.
\end{alignat*}
Hence,
\begin{alignat*}{1}
\sup_{c} \left\{ \norm{I_{p^2} - \zeta_{t} Z_{t,c}^{-1} \otimes Z_{t,c}^{-1}}_2 \right\} & \leq \max \left\{ \abs{1-\frac{\zeta_t}{b^2}},\abs{1-\frac{\zeta_t}{a^2}} \right\}
\end{alignat*}
which completes the proof. 
\end{proof}

It follows from Lemma \ref{lem:ContractionBound} that Algorithm \ref{alg:ISTA} converges linearly if
\begin{alignat}{1}
\label{eq:EigenCondition}
s_t(\zeta_t):=\max \left\{ \abs{1-\frac{\zeta_t}{b^2}},\abs{1-\frac{\zeta_t}{a^2}} \right\} & \in (0,1), \forall t. 
\end{alignat}
Since the minimum of 
\[
s(\zeta) =\max \left\{ \abs{1-\frac{\zeta}{a^2}},\abs{1-\frac{\zeta}{b^2}} \right\}
\]
is at $\zeta = \frac{2}{a^{-2} + b^{-2}}$, Theorem 1 follows directly from Lemma 3. It now remains to show that the eigenvalues of the \texttt{G-ISTA} iterates remain bounded in eigenvalue. A more general convergence result for strongly convex functions exists in the literature; this result is stated below. 
\vspace{.5cm}
\begin{thm}
\label{thm:ISTA_convergence_general}
 Let $f$ be strongly convex with convexity constant $\mu$, and $\nabla f$ be Lipschitz continuous with constant $L$. Then for constant step size $0 < \zeta < \frac{2}{L}$, the iterates of the ISTA iteration (equation \eqref{eq:prox_iter}), $\{x_t\}_{t\geq 0}$ to minimize $f+g$ as in \eqref{eq:split_problem}, satisfy
 \begin{align*}
\norm{x_{t+1} - x^\ast}_F & \leq \max \left\{\abs{1-\zeta L}, \abs{1-\zeta \mu}  \right\}\norm{x_{t} - x^\ast}_F,
\end{align*}
which is to say that they converge linearly with rate $\max \left\{\abs{1-\zeta L}, \abs{1-\zeta \mu}  \right\}$. Furthermore,
\begin{enumerate}
\item The step size which yields an optimal worst-case contraction bound is $\zeta = \frac{2}{\mu + L}$. 
\item The optimal worst-case contraction bound corresponding to $\zeta = \frac{2}{\mu + L}$ is given by
\begin{align*}
s(\zeta) :&= \max \left\{\abs{1-\zeta L}, \abs{1-\zeta \mu}  \right\}\\
& = 1 - \frac{2}{1 + \frac{\mu}{L}}. 
\end{align*}
\end{enumerate}
\end{thm}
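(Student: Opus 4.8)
The plan is to reduce the statement to a contraction estimate for the pure gradient step $T_\zeta(x) := x - \zeta\nabla f(x)$ and then optimize over $\zeta$. First I would invoke the fixed-point characterization $x^\ast = \prox_{\zeta g}(x^\ast - \zeta\nabla f(x^\ast))$ together with the ISTA recursion \eqref{eq:prox_iter}, exactly as in the proof of Lemma \ref{lem:ContractionBound}. Since $\prox_{\zeta g}$ is nonexpansive,
\[
\norm{x_{t+1}-x^\ast} \leq \norm{\bigl(x_t - \zeta\nabla f(x_t)\bigr) - \bigl(x^\ast - \zeta\nabla f(x^\ast)\bigr)} = \norm{(x_t - x^\ast) - \zeta\bigl(\nabla f(x_t)-\nabla f(x^\ast)\bigr)}.
\]
So it suffices to show that $T_\zeta$ is a contraction with factor $\max\{\abs{1-\zeta\mu}, \abs{1-\zeta L}\}$.

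Writing $d := x - y$ and $g := \nabla f(x)-\nabla f(y)$, I would expand
\[
\norm{d - \zeta g}^2 = \norm{d}^2 - 2\zeta\langle g, d\rangle + \zeta^2\norm{g}^2
\]
and control the cross term with the standard co-coercivity inequality for $\mu$-strongly convex, $L$-smooth functions,
\[
\langle g, d\rangle \geq \frac{\mu L}{\mu+L}\norm{d}^2 + \frac{1}{\mu+L}\norm{g}^2.
\]
Substituting leaves a residual multiple of $\norm{g}^2$ whose sign is that of $\zeta - \tfrac{2}{\mu+L}$; here the two elementary bounds $\mu\norm{d}\leq \norm{g}\leq L\norm{d}$ (the first from strong convexity via Cauchy--Schwarz, the second from Lipschitzness of $\nabla f$) enter. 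For $0<\zeta\leq \tfrac{2}{\mu+L}$ the residual coefficient is nonpositive, so I bound $\norm{g}\geq\mu\norm{d}$ and the expression collapses to $(1-\zeta\mu)^2\norm{d}^2$; for $\tfrac{2}{\mu+L}<\zeta<\tfrac{2}{L}$ the coefficient is positive, so I bound $\norm{g}\leq L\norm{d}$ and it collapses to $(1-\zeta L)^2\norm{d}^2$. In each regime the surviving factor is exactly $\max\{\abs{1-\zeta\mu},\abs{1-\zeta L}\}$. If one is willing to assume $f\in C^2$, this step can instead be carried out precisely as in Lemma \ref{lem:ContractionBound}: $\nabla^2 f$ has spectrum in $[\mu, L]$, so $I - \zeta\nabla^2 f$ has spectral norm at most $\max\{\abs{1-\zeta\mu},\abs{1-\zeta L}\}$, and the mean value theorem finishes the bound.

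For the two ``furthermore'' claims I would minimize the explicit rate $s(\zeta)=\max\{\abs{1-\zeta\mu},\abs{1-\zeta L}\}$ over $\zeta\in(0,2/L)$. This is a one-variable problem: on the admissible range the optimum balances the two branches, occurring where $1-\zeta\mu = -(1-\zeta L)$, i.e. $\zeta = \tfrac{2}{\mu+L}$; substituting gives $s = \tfrac{L-\mu}{L+\mu} = 1 - \tfrac{2}{1+L/\mu}$, which lies in $[0,1)$ since $0<\mu\leq L$. The main obstacle is the middle step: obtaining the \emph{tight} constant $\max\{\abs{1-\zeta\mu},\abs{1-\zeta L}\}$ across the entire range $0<\zeta<2/L$ rather than only for small steps. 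This is where the co-coercivity inequality (as opposed to using strong convexity and Lipschitzness separately) is essential, since it is exactly what lets the two regimes close up into the clean squares $(1-\zeta\mu)^2$ and $(1-\zeta L)^2$; the remaining ingredients — nonexpansiveness of the prox and the scalar minimization — are routine.
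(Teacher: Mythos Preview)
Your argument is correct. The paper itself does not prove this theorem at all: its entire proof is the sentence ``See \cite{Chen1997, Nesterov2007} and references therein.'' Your proposal therefore goes well beyond what the paper provides, supplying the standard argument via nonexpansiveness of the prox followed by the co-coercivity inequality for $\mu$-strongly convex, $L$-smooth $f$; the case split on the sign of $\zeta-\tfrac{2}{\mu+L}$ is exactly what is needed to obtain the tight factor over the whole range $0<\zeta<2/L$, and your alternative $C^2$ route via the spectrum of $I-\zeta\nabla^2 f$ is also valid and mirrors the paper's own Lemma~\ref{lem:ContractionBound}. One minor remark: the displayed value $1-\tfrac{2}{1+\mu/L}$ in the statement appears to be a typo (it is negative for $\mu<L$); your computation $s=\tfrac{L-\mu}{L+\mu}=1-\tfrac{2}{1+L/\mu}$ is the correct one and is consistent with the analogous formula $1-\tfrac{2}{1+b^2/a^2}$ in Theorem~\ref{thm:ISTA_convergence}.
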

\begin{proof}
See \cite{Chen1997, Nesterov2007} and references therein. 
\end{proof}

\subsection{Proof of Theorem \ref{thm:eigenvalue_bounds}}

In this section, the eigenvalues of $\Theta_t, \forall t$ are bounded. To begin, the eigenvalues of $\Theta_{t+\frac{1}{2}} :=  \Theta_t - \zeta_t (S - \Theta_{t}^{-1})$ are bounded. 
\vspace{.5pc}
\begin{lemma}
\label{lemma:general_bounds}
 Let $0 < a < b $ be given positive constants and let $\zeta_t > 0$. Assume $a I \preceq \Theta_{t} \preceq b I$. Then the eigenvalues of $\Theta_{t+\frac{1}{2}} := \Theta_t - \zeta_t (S - \Theta_{t}^{-1})$ satisfy:
\begin{equation}
\label{eqn:bound_alpha}
\lambda_\textrm{min}(\Theta_{t+\frac{1}{2}}) \geq \left\{\begin{array}{cc}2 \sqrt{\zeta_t} - \zeta_t \lambda_{\max}(S)  & \textrm{ if } a \leq \sqrt{\zeta_t} \leq b \\
\min\left(a+\frac{\zeta_t}{a},b+\frac{\zeta_t}{b}\right) - \zeta_t \lambda_{\max}(S) & \textrm{ otherwise}\end{array}\right.
\end{equation}
and 
\begin{equation*}
\lambda_\textrm{max}(\Theta_{t+\frac{1}{2}}) \leq \max\left(a+\frac{\zeta_t}{a},b+\frac{\zeta_t}{b}\right) - \zeta_t \lambda_{\textrm{min}}(S).
\end{equation*}
\end{lemma}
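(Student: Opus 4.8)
The plan is to analyze the eigenvalues of $\Theta_{t+\frac{1}{2}} := \Theta_t - \zeta_t(S - \Theta_t^{-1})$ by exploiting the fact that $\Theta_t$ and $\Theta_t^{-1}$ are simultaneously diagonalizable, whereas $S$ is an arbitrary symmetric term that can be handled separately via Weyl's inequality. First I would write $\Theta_{t+\frac{1}{2}} = (\Theta_t + \zeta_t \Theta_t^{-1}) - \zeta_t S$ and treat $\Theta_t + \zeta_t \Theta_t^{-1}$ as the main piece. Since this piece is a function of $\Theta_t$ alone, its eigenvalues are exactly $\{\lambda_i + \zeta_t/\lambda_i\}$, where $\lambda_i$ are the eigenvalues of $\Theta_t$, all lying in $[a,b]$ by assumption.

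The key step is then to understand the scalar function $\phi(\lambda) = \lambda + \zeta_t/\lambda$ on the interval $[a,b]$. I would note that $\phi$ is convex with a unique minimum at $\lambda = \sqrt{\zeta_t}$, where it takes the value $2\sqrt{\zeta_t}$, and is decreasing on $(0,\sqrt{\zeta_t}]$ and increasing on $[\sqrt{\zeta_t},\infty)$. Consequently, on $[a,b]$ the minimum of $\phi$ is $2\sqrt{\zeta_t}$ if $\sqrt{\zeta_t}\in[a,b]$ and otherwise is attained at an endpoint, giving $\min(a+\zeta_t/a,\ b+\zeta_t/b)$; the maximum is always attained at an endpoint, giving $\max(a+\zeta_t/a,\ b+\zeta_t/b)$, since the worst case for the maximum never occurs at the interior minimizer. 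This is precisely the case split appearing in the statement, so this elementary one-variable analysis does the bulk of the work.

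Finally I would combine the two pieces using Weyl's inequality for sums of symmetric matrices: for the minimum eigenvalue, $\lambda_{\min}(\Theta_{t+\frac12}) \geq \lambda_{\min}(\Theta_t + \zeta_t\Theta_t^{-1}) + \lambda_{\min}(-\zeta_t S) = \lambda_{\min}(\Theta_t + \zeta_t\Theta_t^{-1}) - \zeta_t\lambda_{\max}(S)$, and symmetrically $\lambda_{\max}(\Theta_{t+\frac12}) \leq \lambda_{\max}(\Theta_t + \zeta_t\Theta_t^{-1}) - \zeta_t\lambda_{\min}(S)$. Substituting the endpoint/interior values of $\phi$ computed above yields exactly the two displayed bounds.

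I do not expect a serious obstacle here, as the argument is essentially a decomposition plus a calculus lemma. The one point requiring care is making sure the convexity analysis of $\phi$ on $[a,b]$ correctly produces the asymmetry between the lower and upper bounds: the interior minimizer $\sqrt{\zeta_t}$ can lower the minimum eigenvalue (hence the case distinction in the lower bound) but is irrelevant to the maximum, which is governed solely by the endpoints. Keeping the direction of each Weyl inequality consistent with whether we are bounding $\lambda_{\min}$ from below or $\lambda_{\max}$ from above is the only place where a sign slip could creep in.
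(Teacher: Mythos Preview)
Your proposal is correct and follows essentially the same approach as the paper: decompose $\Theta_{t+\frac12}$ as $(\Theta_t + \zeta_t\Theta_t^{-1}) - \zeta_t S$, apply Weyl's inequality to separate the $S$ term, and reduce the analysis of the first piece to the scalar function $\phi(\lambda)=\lambda+\zeta_t/\lambda$ on $[a,b]$, whose minimum and maximum are computed exactly as you describe. The paper presents this via an explicit eigendecomposition $\Theta_t = U\Gamma U^T$ rather than invoking simultaneous diagonalizability, but this is a cosmetic difference.
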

\begin{proof}
Denoting the eigenvalue decomposition of $\Theta_{t}$ by $\Theta_{t} = U\Gamma U^T$,
\begin{alignat*}{1}
\Theta_{t+\frac{1}{2}}  & = \Theta_t - \zeta_t (S - \Theta_{t}^{-1}) \\
& = U\Gamma U^T - \zeta_t (S - U\Gamma^{-1}U^T) \\
& = U\left( \Gamma - \zeta_t ( U^T S U - \Gamma^{-1} ) \right)U^T
\end{alignat*}
Let $\Gamma = \mbox{diag}(\gamma_1, \dots, \gamma_p)$ with $\gamma_1 \leq \dots \leq \gamma_p$. By Weyl's inequality, the eigenvalues of $\Theta_{t+\frac{1}{2}}$ are bounded below by
\begin{alignat*}{1}
\lambda_{i} \left(\Theta_{t+\frac{1}{2}} \right) & \geq \gamma_i + \frac{\zeta_t}{\gamma_i} -\zeta_t \lambda_{\max}(S),
\end{alignat*} 
and bounded above by
\begin{alignat*}{1}
\lambda_{i} \left(\Theta_{t+\frac{1}{2}} \right) & \leq \gamma_i + \frac{\zeta_t}{\gamma_i} -\zeta_t \lambda_{\min}(S)
\end{alignat*}
The function $f(x) = x + \frac{\zeta_t}{x}$ over $a \leq x \leq b$ has only one extremum which is a global minimum at $x = \sqrt{\zeta_t}$. Therefore, 
\begin{equation*}
\min_{a \leq x \leq b} x + \frac{\zeta_t}{x} = \left\{\begin{array}{cc}2 \sqrt{\zeta_t} & \textrm{ if } a \leq \sqrt{\zeta_t} \leq b \\
\min\left(a+\frac{\zeta_t}{a},b+\frac{\zeta_t}{b}\right) & \textrm{ otherwise}\end{array}\right. ,
\end{equation*}
and 
\begin{equation*}
\max_{a\leq x \leq b} x + \frac{\zeta_t}{x} = \max\left(a+\frac{\zeta_t}{b},b+\frac{\zeta_t}{b}\right).
\end{equation*}

Since $a \leq \gamma_1 \leq b$,
\begin{alignat*}{1}
\lambda_{\textrm{min}}(\Theta_{t+\frac{1}{2}}) &\geq \gamma_1 + \frac{\zeta_t}{\gamma_1} -\zeta_t \lambda_{\max}(S) \\
&\geq \min_{a \leq x \leq b} \left(x+\frac{\zeta_t}{x}\right) - \zeta_t \lambda_{\max}(S)\\
&= \left\{\begin{array}{cc}2 \sqrt{\zeta_t} - \zeta_t \lambda_{\max}(S) & \textrm{ if } a\leq \sqrt{\zeta_t} \leq b \\
\min\left(a+\frac{\zeta_t}{a},b+\frac{\zeta_t}{b}\right) - \zeta_t \lambda_{\max}(S)& \textrm{ otherwise}\end{array}\right.
\end{alignat*}

Similarly, 
\begin{alignat*}{1}
\lambda_{\textrm{max}}(\Theta_{t + \frac{1}{2}}) &\leq \gamma_p + \frac{\zeta_t}{\gamma_p} - \zeta_t \lambda_\textrm{min}(S) \\
&\leq  \max_{a \leq x \leq b} \left(x + \frac{\zeta_t}{x}\right)- \zeta_t \lambda_{\textrm{min}}(S) \\
&= \max\left(a+\frac{\zeta_t}{a},b+\frac{\zeta_t}{b}\right) - \zeta_t \lambda_{\textrm{min}}(S). 
\end{alignat*}
\end{proof}

It remains to demonstrate that the soft-thresholded iterates $\Theta_{t+1}$ remain bounded in eigenvalue.
\vspace{.5pc} 

\begin{lemma}
\label{lem:equivmin}
Let $0 < a < b$ and $\zeta_t >0$. Then:
\begin{equation*}
\min\left(a + \frac{\zeta_t}{a}, b + \frac{\zeta_t}{b}\right) = a + \frac{\zeta_t}{a}
\end{equation*} 
if and only if $\zeta_t \leq a b$.
\end{lemma}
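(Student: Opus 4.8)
The plan is to reduce the biconditional to a single sign computation. The minimum equals $a + \frac{\zeta_t}{a}$ precisely when $a + \frac{\zeta_t}{a} \leq b + \frac{\zeta_t}{b}$, so it suffices to determine the sign of the difference
\begin{alignat*}{1}
D & := \left(a + \frac{\zeta_t}{a}\right) - \left(b + \frac{\zeta_t}{b}\right)
\end{alignat*}
as $\zeta_t$ varies, and to identify the threshold value of $\zeta_t$ at which $D$ changes sign.

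First I would combine the terms over a common denominator and factor. A direct computation gives
\begin{alignat*}{1}
D & = (a - b) + \zeta_t\left(\frac{1}{a} - \frac{1}{b}\right) = (a-b) + \zeta_t\,\frac{b-a}{ab} = (a-b)\left(1 - \frac{\zeta_t}{ab}\right).
\end{alignat*}
The key observation is that $D$ factors as a product of two quantities whose signs are transparent: since $0 < a < b$, the factor $a - b$ is strictly negative, so the sign of $D$ is opposite to the sign of $1 - \frac{\zeta_t}{ab}$. This immediately exposes $ab$ as the crossover value of $\zeta_t$.

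From here both directions follow. If $\zeta_t \leq ab$, then $1 - \frac{\zeta_t}{ab} \geq 0$, hence $D \leq 0$ and $a + \frac{\zeta_t}{a} \leq b + \frac{\zeta_t}{b}$, so the minimum is attained at $a + \frac{\zeta_t}{a}$. Conversely, if $\zeta_t > ab$, then $a - b < 0$ and $1 - \frac{\zeta_t}{ab} < 0$, so $D > 0$; thus $b + \frac{\zeta_t}{b}$ is strictly smaller, and the minimum differs from $a + \frac{\zeta_t}{a}$.

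I do not anticipate any genuine obstacle, as the entire content of the lemma is the factorization of $D$. The only point requiring a moment's care is the boundary case $\zeta_t = ab$, where the two expressions coincide and both equal the minimum; this is exactly what makes the non-strict inequality $\zeta_t \leq ab$ the correct threshold, and it is handled automatically by the $\geq 0$ in the forward direction above.
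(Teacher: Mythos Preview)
Your proof is correct and takes essentially the same approach as the paper: both reduce the claim to the inequality $a + \frac{\zeta_t}{a} \leq b + \frac{\zeta_t}{b}$ and simplify algebraically to the condition $\zeta_t \leq ab$. Your factorization $D = (a-b)\bigl(1 - \frac{\zeta_t}{ab}\bigr)$ is a slightly cleaner packaging of the same computation the paper carries out by isolating $\zeta_t$ directly.
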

\begin{proof}
Under the stated assumptions, 
\begin{align*}
a + \frac{\zeta_t}{a} \leq b + \frac{\zeta_t}{b} &\Leftrightarrow \zeta_t\left(\frac{1}{a} - \frac{1}{b}\right) \leq b - a \\
&\Leftrightarrow \zeta_t \leq \frac{b-a}{\frac{1}{a}- \frac{1}{b}} \\
&\Leftrightarrow \zeta_t \leq a b.
\end{align*}
\end{proof}

\begin{lemma}
\label{lem:bound_soft}
Let $A$ be a symmetric $p \times p$ matrix. Then the soft-thresholded matrix $\eta_{\epsilon}(A)$ satisfies
\[
 \lambda_{\min}(A) - p \epsilon \leq \lambda_{\min}(\eta_{\epsilon}(A))
\]
In particular, $A_\epsilon$ is positive definite if $\lambda_{\min}(A) > p \epsilon$. 
\end{lemma}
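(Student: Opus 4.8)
The plan is to write the soft-thresholded matrix as a small perturbation of $A$ and to control that perturbation in eigenvalue via Weyl's inequality (already invoked elsewhere in this appendix). Define $B := A - \eta_\epsilon(A)$, so that $\eta_\epsilon(A) = A - B$. The crucial observation is that every entry of $B$ is bounded in magnitude by $\epsilon$. Examining the soft-thresholding rule entrywise: if $\abs{A_{ij}} \leq \epsilon$ then $[\eta_\epsilon(A)]_{ij} = 0$, so $B_{ij} = A_{ij}$ with $\abs{B_{ij}} \leq \epsilon$; if $\abs{A_{ij}} > \epsilon$ then $[\eta_\epsilon(A)]_{ij} = \sign(A_{ij})(\abs{A_{ij}} - \epsilon)$, so $B_{ij} = \sign(A_{ij})\epsilon$ and $\abs{B_{ij}} = \epsilon$. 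In either case $\abs{B_{ij}} \leq \epsilon$. Since $A$ is symmetric and $\eta_\epsilon$ acts entrywise while preserving the sign pattern, both $\eta_\epsilon(A)$ and $B$ are symmetric.

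Next I would apply Weyl's inequality for a sum of symmetric matrices, namely $\lambda_{\min}(A - B) \geq \lambda_{\min}(A) + \lambda_{\min}(-B) = \lambda_{\min}(A) - \lambda_{\max}(B)$, reducing the problem to bounding $\lambda_{\max}(B)$. Because $B$ is symmetric, $\lambda_{\max}(B) \leq \norm{B}_2 \leq \norm{B}_F$, and the entrywise bound gives
\[
\norm{B}_F = \sqrt{\sum_{i,j} B_{ij}^2} \leq \sqrt{p^2 \epsilon^2} = p\epsilon.
\]
Combining these yields $\lambda_{\min}(\eta_\epsilon(A)) = \lambda_{\min}(A - B) \geq \lambda_{\min}(A) - p\epsilon$, which is precisely the claimed inequality. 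The ``in particular'' assertion is then immediate: if $\lambda_{\min}(A) > p\epsilon$, the right-hand side is strictly positive, so $\eta_\epsilon(A)$ is positive definite.

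This argument has no substantive obstacle; the only step requiring genuine care is the entrywise bound $\abs{B_{ij}} \leq \epsilon$, which is effectively the entire content of the lemma and follows directly from the definition of $\eta_\epsilon$, while the remaining steps are standard. If a slightly sharper constant were desired one could instead bound $\norm{B}_2 \leq \norm{B}_\infty = \max_i \sum_j \abs{B_{ij}} \leq p\epsilon$ (valid for symmetric $B$ since then $\norm{B}_1 = \norm{B}_\infty$), but this produces the same factor $p\epsilon$ and so the Frobenius-norm route is preferable for its brevity.
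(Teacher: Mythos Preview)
Your argument is correct. Both your proof and the paper's hinge on the same two ingredients: the entrywise bound $\abs{A_{ij} - [\eta_\epsilon(A)]_{ij}} \leq \epsilon$ and Weyl's inequality applied to $\eta_\epsilon(A) = A - B$. The difference lies only in how the spectral norm of the perturbation $B$ is controlled. The paper takes a more elaborate route: it writes $-B$ as a ``layer-cake'' sum $\sum_{i=1}^k \epsilon_i A_i$ with each $A_i$ a $\{0,\pm 1\}$-valued matrix and $\sum_i \epsilon_i = \epsilon$, then invokes the Gershgorin circle theorem to bound $\abs{\lambda_{\max}(A_i)} \leq p$ for each layer, and finally applies Weyl recursively. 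Your one-line bound $\lambda_{\max}(B) \leq \norm{B}_2 \leq \norm{B}_F \leq p\epsilon$ (or the $\norm{\cdot}_\infty$ variant you mention, which is essentially Gershgorin applied directly to $B$) reaches the identical conclusion with less machinery. The paper's decomposition buys nothing extra here---the constant $p$ is the same---so your approach is strictly more economical.
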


\begin{proof}
Let
\[
\mathcal{A} := \{M \in \mathbb{M}_p : M_{i,j} \in \{0,1,-1\}\} . 
\]
For every $\epsilon > 0$, the matrix $A_\epsilon$ can be written as 
\[
\eta_{\epsilon}(A) = A + \epsilon_1 A_1 + \epsilon_2 A_2 + \dots + \epsilon_k A_k,
\]
for some $k \leq {p \choose 2} + p$ where $A_i \in \mathcal{A}$, $\epsilon_i > 0$ and $\sum_{i=1}^k \epsilon_i = \epsilon$. Now let 
\[
c_p := \max \{|\lambda_{\min}(M)| : M \in \mathcal{A}\}. 
\]

The constant $c_p$ is finite since $\mathcal{A}$ is a finite set. Since $-A \in \mathcal{A}$ for every $A \in \mathcal{A}$, and since $|\lambda_{\min}(-A)| = |\lambda_{\max}(A)|$, it follows that
\begin{alignat*}{1}
c_p  = \max \{|\lambda_{\max}(M)| : M \in \mathcal{A}\}. 
\end{alignat*}
Applying the Gershgorin circle theorem \citep[see, e.g.,][]{HornAndJohnson} gives $c_p \leq p$. Since $p$ is an eigenvalue of the matrix $B$ such that $B_{i,j} = 1$ for all $i,j$, it follows that $c_p = p$. 

Recursive application of Weyl's inequality gives that
\begin{eqnarray*}
\lambda_{\min}\left(\eta_{\epsilon}(A)\right) &\geq& \lambda_{\min}(A) - \epsilon |\lambda_{\max}(A_1)| - \dots - \epsilon_k |\lambda_{\max}(A_k)|\\
&\geq& \lambda_{\min}(A) - c_p \sum_{i=1}^k \epsilon_i \\
&=& \lambda_{\min}(A) - c_p \epsilon. 
\end{eqnarray*}
\end{proof}

Recall from Lemma \ref{lem:opt_bound} that the eigenvalues of the optimal solution to problem \eqref{eq:sparse_estimator} are bounded below by $\frac{1}{\|S\|_{2} + p \rho}$. The following theorem shows that $\alpha = \frac{1}{\|S\|_{2} + p \rho}$ is a valid bound to ensure that $\alpha I \preceq \Theta_{t+1}$ if $\alpha I \preceq \Theta_t$. 
\vspace{.5pc}
\begin{lemma}
\label{lem:bnd_alpha}
Let $\rho > 0$ and $\alpha = \frac{1}{\|S\|_{2} + p\rho} < b^\prime$. Assume $\alpha I\preceq \Theta_t \preceq b^\prime$ and consider 
\[
\Theta_{t+1} = \eta_{\zeta_t \rho} \left(\Theta_t - \zeta_t (S - \Theta_t^{-1})\right)
\]
Then for every $0 < \zeta_t \leq \alpha^2$, $\alpha I \preceq \Theta_{t+1}$. 
\end{lemma}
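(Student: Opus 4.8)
The plan is to combine the lower eigenvalue bound on the intermediate iterate $\Theta_{t+\frac{1}{2}} = \Theta_t - \zeta_t(S - \Theta_t^{-1})$ from Lemma~\ref{lemma:general_bounds} with the soft-thresholding bound of Lemma~\ref{lem:bound_soft}, using $\epsilon = \zeta_t \rho$. First I would invoke Lemma~\ref{lem:bound_soft} to write
\begin{align*}
\lambda_{\min}(\Theta_{t+1}) = \lambda_{\min}\!\left(\eta_{\zeta_t \rho}(\Theta_{t+\frac{1}{2}})\right) \geq \lambda_{\min}(\Theta_{t+\frac{1}{2}}) - p\,\zeta_t \rho,
\end{align*}
so that it suffices to show $\lambda_{\min}(\Theta_{t+\frac{1}{2}}) - p\,\zeta_t \rho \geq \alpha$. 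This reduces the problem to controlling the lower eigenvalue of $\Theta_{t+\frac{1}{2}}$, for which Lemma~\ref{lemma:general_bounds} is exactly suited.

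Next I would split into the two cases of Lemma~\ref{lemma:general_bounds} according to whether $\sqrt{\zeta_t} \in [\alpha, b']$ or not. Since we are given $\zeta_t \leq \alpha^2$, we have $\sqrt{\zeta_t} \leq \alpha$, so (assuming $\alpha \leq b'$) the relevant regime is the ``otherwise'' branch, and moreover Lemma~\ref{lem:equivmin} with $a = \alpha$, $b = b'$ tells us that $\zeta_t \leq \alpha^2 \leq \alpha b'$ forces the minimizer of $x + \zeta_t/x$ over $[\alpha, b']$ to occur at the left endpoint, giving $\min(\alpha + \zeta_t/\alpha, b' + \zeta_t/b') = \alpha + \zeta_t/\alpha$. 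Hence Lemma~\ref{lemma:general_bounds} yields
\begin{align*}
\lambda_{\min}(\Theta_{t+\frac{1}{2}}) \geq \alpha + \frac{\zeta_t}{\alpha} - \zeta_t \lambda_{\max}(S).
\end{align*}
Combining with the soft-thresholding loss, I must verify that $\alpha + \frac{\zeta_t}{\alpha} - \zeta_t \lambda_{\max}(S) - p\,\zeta_t \rho \geq \alpha$, i.e.\ that $\frac{\zeta_t}{\alpha} \geq \zeta_t(\lambda_{\max}(S) + p\rho)$, which is equivalent to $\frac{1}{\alpha} \geq \lambda_{\max}(S) + p\rho = \norm{S}_2 + p\rho$. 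This holds with equality by the definition $\alpha = \frac{1}{\norm{S}_2 + p\rho}$.

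The calculation is essentially tight, and the main obstacle is bookkeeping rather than a genuine difficulty: one must be careful that the two lossy bounds (the $\zeta_t/\alpha$ gain from the inversion term and the $p\zeta_t\rho$ loss from soft-thresholding) are set up so that the $\zeta_t$-linear terms cancel precisely against $\zeta_t \norm{S}_2$, leaving $\alpha$ unchanged. The delicate point is ensuring the case analysis of Lemma~\ref{lemma:general_bounds} lands in the correct branch; this is precisely where the hypothesis $\zeta_t \leq \alpha^2$ (rather than merely $\zeta_t \leq \alpha b'$) is used, via Lemma~\ref{lem:equivmin}, to guarantee the minimum is attained at $x = \alpha$ and that the favorable bound $\alpha + \zeta_t/\alpha$ is available.
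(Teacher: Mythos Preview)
Your proposal is correct and follows essentially the same route as the paper: apply Lemma~\ref{lemma:general_bounds} (landing in the ``otherwise'' branch since $\sqrt{\zeta_t}\leq\alpha$), use Lemma~\ref{lem:equivmin} to identify the minimum as $\alpha+\zeta_t/\alpha$, then absorb the $p\zeta_t\rho$ loss from Lemma~\ref{lem:bound_soft} and observe that the resulting inequality $\frac{1}{\alpha}\geq \lambda_{\max}(S)+p\rho$ is exactly the definition of $\alpha$. The only cosmetic difference is that you invoke the soft-thresholding bound first and the eigenvalue bound second, whereas the paper reverses the order; the chain of inequalities is identical.
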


\begin{proof}
The result follows by combining Lemma \ref{lemma:general_bounds} and Lemma \ref{lem:bound_soft}. Notice first that the hypothesis $\zeta_t \leq \alpha^2$ guarantees that $\sqrt{\zeta_t} \notin [\alpha, b^\prime]$. Also, from Lemma \ref{lem:equivmin}, we have 
\begin{equation*}
\min\left(\alpha + \frac{\zeta_t}{\alpha}, b^\prime + \frac{\zeta_t}{b^\prime}\right) = \alpha + \frac{\zeta_t}{\alpha}
\end{equation*}
since $\zeta_t \leq \alpha^2 \leq \alpha b^\prime$. Hence, by Lemma \ref{lemma:general_bounds}, 
\begin{eqnarray*}
\lambda_\textrm{min} (\Theta_{t+\frac{1}{2}}) &\geq& \min\left(\alpha+\frac{\zeta_t}{\alpha},b^\prime+\frac{\zeta_t}{b^\prime}\right) - \zeta_t \lambda_{\max}(S) \\
&=& \alpha+\frac{\zeta_t}{\alpha} - \zeta_t \lambda_{\max}(S). 
\end{eqnarray*}
Now, applying Lemma \ref{lem:bound_soft} to $\Theta_{t+1} = \eta_{\zeta_t \rho}(\Theta_{t+\frac{1}{2}})$, we obtain 
\begin{eqnarray*}
\lambda_\textrm{min}(\Theta_{t+1}) &=& \lambda_\textrm{min}\left(\eta_{\zeta_t \rho}(\Theta_{t+\frac{1}{2}})\right) \\
&\geq& \lambda_\textrm{min}(\Theta_{t+\frac{1}{2}}) - p\rho\zeta_t \\
&\geq& \alpha+\frac{\zeta_t}{\alpha} - \zeta_t \lambda_{\max}(S) -  p\rho\zeta_t. 
\end{eqnarray*}
We therefore have $\alpha I \preceq \Theta_{t+1}$ whenever 
\begin{equation*}
\alpha+\frac{\zeta_t}{\alpha} - \zeta_t \lambda_{\max}(S) -  p\rho\zeta_t \geq \alpha. 
\end{equation*}
This is equivalent to 
\begin{equation*}
\zeta_t \left(\frac{1}{\alpha}-\lambda_\textrm{max}(S) - p\rho\right) \geq 0. 
\end{equation*}
Since $\zeta_t > 0$, this is equivalent to 
\begin{equation*}
\frac{1}{\alpha}-\lambda_\textrm{max}(S) - p\rho \geq 0. 
\end{equation*}
Reorganizing the terms of the previous equation, we obtain that $\alpha I \preceq \Theta_{t+1}$ if 
\begin{equation*}
\alpha \leq \frac{1}{\lambda_\textrm{max}(S)+ p\rho} = \frac{1}{\|S\|_{2}+p\rho}.
\end{equation*}
\end{proof}

It remains to show that the eigenvalues of the iterates $\Theta_t$ remain bounded above, for all $t$. 
\vspace{.5pc}
\begin{lemma}
\label{lem:bnd_beta}
Let $\alpha = \frac{1}{\|S\|_{2} + p\rho}$ and let $\zeta_t \leq \alpha^2, \forall t$. Then the G-ISTA iterates $\Theta_t$ satisfy $\Theta_t \preceq b^\prime I, \forall t$, with $b^\prime = \norm{\Theta_\rho^{\ast}}_2 + \norm{\Theta_0 - \Theta_\rho^\ast}_F$.  
\end{lemma}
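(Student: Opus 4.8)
The plan is to show by induction on $t$ that the distance to the optimum never increases, i.e. $\norm{\Theta_t - \Theta^\ast_\rho}_F \leq \norm{\Theta_0 - \Theta^\ast_\rho}_F$ for all $t$, and then to convert this Frobenius control into the claimed spectral bound through the triangle inequality together with $\norm{\cdot}_2 \leq \norm{\cdot}_F$:
\[
\lambda_{\max}(\Theta_t) = \norm{\Theta_t}_2 \leq \norm{\Theta^\ast_\rho}_2 + \norm{\Theta_t - \Theta^\ast_\rho}_2 \leq \norm{\Theta^\ast_\rho}_2 + \norm{\Theta_t - \Theta^\ast_\rho}_F,
\]
the right-hand side being exactly $b^\prime$ once the induction is in hand.

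The engine driving the induction is the contraction estimate of Lemma \ref{lem:ContractionBound}: writing $a := \min\{\lambda_{\min}(\Theta_t), \lambda_{\min}(\Theta^\ast_\rho)\}$ and $b := \max\{\lambda_{\max}(\Theta_t), \lambda_{\max}(\Theta^\ast_\rho)\}$, one has $\norm{\Theta_{t+1} - \Theta^\ast_\rho}_F \leq \max\{\abs{1-\zeta_t/b^2}, \abs{1-\zeta_t/a^2}\}\,\norm{\Theta_t - \Theta^\ast_\rho}_F$. What I must check is that this contraction factor does not exceed $1$. Both $\Theta^\ast_\rho$ (via Lemma \ref{lem:opt_bound}) and the iterate $\Theta_t$ (via Lemma \ref{lem:bnd_alpha}) have smallest eigenvalue at least $\alpha$, so $a \geq \alpha$ and hence $b \geq a \geq \alpha$; combined with the hypothesis $\zeta_t \leq \alpha^2$ this forces $\zeta_t/a^2 \leq 1$ and $\zeta_t/b^2 \leq 1$, so that $1-\zeta_t/a^2$ and $1-\zeta_t/b^2$ both lie in $[0,1)$ and the maximum of their absolute values is strictly below $1$. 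The iteration is therefore non-expansive toward $\Theta^\ast_\rho$, closing the inductive step.

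Because Lemma \ref{lem:bnd_alpha} (the lower bound $\alpha I \preceq \Theta_t$) itself takes the upper bound $\Theta_t \preceq b^\prime I$ as a hypothesis, the cleanest way to make this rigorous is to run a single \emph{joint} induction carrying the two statements $\alpha I \preceq \Theta_t \preceq b^\prime I$ and $\norm{\Theta_t - \Theta^\ast_\rho}_F \leq \norm{\Theta_0 - \Theta^\ast_\rho}_F$ together. At each step I would first invoke Lemma \ref{lem:bnd_alpha} to propagate the lower bound to $\Theta_{t+1}$, then the contraction estimate to propagate the Frobenius bound, and finally the displayed triangle inequality to recover $\Theta_{t+1} \preceq b^\prime I$. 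The base case $t=0$ is immediate: $\norm{\Theta_0 - \Theta^\ast_\rho}_F \leq \norm{\Theta_0 - \Theta^\ast_\rho}_F$ trivially, and $\norm{\Theta_0}_2 \leq \norm{\Theta^\ast_\rho}_2 + \norm{\Theta_0 - \Theta^\ast_\rho}_F = b^\prime$.

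The main obstacle is precisely this circular interdependence between the lower and upper eigenvalue bounds, which the joint induction resolves, together with the need to exclude an absolute-value ``wrap-around'' in the contraction factor. It is exactly the hypothesis $\zeta_t \leq \alpha^2 \leq a^2$ that keeps $1-\zeta_t/a^2$ nonnegative and hence keeps the factor below $1$; a larger step could in principle make $\abs{1-\zeta_t/a^2}$ exceed $1$ and destroy non-expansiveness. Everything else reduces to routine use of the triangle inequality and the norm comparison $\norm{\cdot}_2 \leq \norm{\cdot}_F$.
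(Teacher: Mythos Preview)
Your proposal is correct and follows essentially the same route as the paper: invoke Lemma~\ref{lem:bnd_alpha} for the lower eigenvalue bound, feed this into Lemma~\ref{lem:ContractionBound} to see that the contraction factor is at most $1$ (since $\zeta_t \leq \alpha^2 \leq a^2 \leq b^2$), deduce $\norm{\Theta_t - \Theta^\ast_\rho}_F \leq \norm{\Theta_0 - \Theta^\ast_\rho}_F$ recursively, and finish with the triangle inequality plus $\norm{\cdot}_2 \leq \norm{\cdot}_F$.

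Your joint induction is in fact a \emph{tightening} of the paper's presentation. The paper simply writes ``By Lemma~\ref{lem:bnd_alpha}, $\alpha I \preceq \Theta_t$ for every $t$'' without addressing the upper-bound hypothesis $\Theta_t \preceq b' I$ that appears in the statement of Lemma~\ref{lem:bnd_alpha}. You noticed this and resolved it with a simultaneous induction on both eigenvalue bounds together with the Frobenius distance bound. That is a cleaner way to close the loop. (One could alternatively observe that the \emph{proof} of Lemma~\ref{lem:bnd_alpha} never actually uses the value of $b'$: once $\zeta_t \leq \alpha^2$, the relevant minimum in Lemma~\ref{lemma:general_bounds} is $\alpha + \zeta_t/\alpha$ regardless of $b'$, so the lower bound propagates from $\alpha I \preceq \Theta_t$ alone. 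Either resolution works.) One small omission: your base case should also record $\alpha I \preceq \Theta_0$, which is an implicit assumption on the initialization throughout.
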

\begin{proof}
By Lemma \ref{lem:bnd_alpha}, $\alpha I \preceq \Theta_t$ for every $t$. As $\alpha I \preceq \Theta^{\ast}$ (Lemma \ref{lem:opt_bound}),
\begin{equation*}
\Lambda_t^- := \min \{\lambda_\textrm{min}(\Theta_t), \lambda_\textrm{min}(\Theta^\ast_\rho)\}^2 \geq \alpha^2. 
\end{equation*}
for all $t$. Also, since $\Lambda_t^+ \geq \Lambda_t^-$ and $\zeta_t \leq \alpha^2$, 
\begin{equation*}
\max \left\{ \abs{1-\frac{\zeta_t}{b^2}},\abs{1-\frac{\zeta_t}{a^2}} \right\} \leq 1. 
\end{equation*}
Therefore, by Lemma \ref{lem:ContractionBound},
\begin{equation*}
\|\Theta_t - \Theta^\ast_\rho\|_F \leq \|\Theta_{t-1}-\Theta^\ast_\rho\|_F. 
\end{equation*}
Applying this result recursively gives
\begin{equation*}
\|\Theta_t - \Theta^\ast_\rho|_F \leq \|\Theta_0-\Theta^\ast_\rho\|_F. 
\end{equation*}
Since $\|\cdot\|_2 \leq \|\cdot\|_F$, we therefore have
\begin{equation*}
\|\Theta_t\|_2 - \|\Theta^\ast_\rho\|_2 \leq \|\Theta_t - \Theta^\ast_\rho\|_2 \leq \|\Theta_t - \Theta^\ast_\rho\|_F \leq \|\Theta_0-\Theta^\ast_\rho\|_F, 
\end{equation*}
and so, 
\begin{equation*}
\lambda_\textrm{max}(\Theta_t) = \|\Theta_t\|_2 \leq \|\Theta^\ast_\rho\|_2 + \|\Theta_0-\Theta^\ast_\rho\|_F
\end{equation*}
which completes the proof. 
\end{proof}

\subsection{Additional timing comparisons}
This section provides additional synthetic timing comparisions for $p=500$ and $p=5000$. In addition, two real datasets were investigated. The ``estrogen" dataset \cite{Pittman2004} contains $p=652$ dimensional gene expression data from $n=158$  breast cancer patients. The ``temp"  dataset \cite{Brohan06} consists of average annual temperature measurements from $p = 1732$ locations over $n = 157$ years (1850-2006). 
\begin{table}[h]
\small
\begin{center}
\begin{tabular}{|c|c|c|c|c|c|}
\cline{2-6} 
\multicolumn{1}{c|}{} & $\rho$ & $0.05$ & $0.10$ & $0.15$ & $0.20$\tabularnewline
\hline 
\textbf{problem} & \textbf{algorithm} & \textbf{time/iter} & \textbf{time/iter} & \textbf{time/iter} & \textbf{time/iter}\tabularnewline
\hline 
\hline 
 & $\mbox{nnz}(\Omega_{\rho}^{\ast})$/$\kappa(\Omega_{\rho}^{\ast})$ & $31.61\%/42.76$ & $19.61\%/18.23$ & $11.08\%/8.13$ & $5.02\%/3.06$\tabularnewline
\cline{2-6} 
$p=500$ & \texttt{glasso} & $28.34/11$ & $10.91/8$ & $7.08/7$ & $5.57/6$\tabularnewline
\cline{2-6} 
$n=100$ & \texttt{QUIC} & $8.33/23$ & $1.98/13$ & $0.96/11$ & $0.38/10$\tabularnewline
\cline{2-6} 
$\mbox{nnz}(\Omega)=3\%$ & \texttt{G-ISTA} & $\mathbf{4.44}/402$ & $\mathbf{1.14}/110$ & $\mathbf{0.30}/38$ & $\mathbf{0.14}/18$\tabularnewline
\hline 
 & $\mbox{nnz}(\Omega_{\rho}^{\ast})$/$\kappa(\Omega_{\rho}^{\ast})$ & $20.73\%/6.62$ & $3.93\%/2.44$ & $0.90\%/1.49$ & $0.13\%/1.20$\tabularnewline
\cline{2-6} 
$p=500$ & \texttt{glasso} & $7.44/6$ & $4.53/5$ & $3.45/4$ & $2.62/3$\tabularnewline
\cline{2-6} 
$n=600$ & \texttt{QUIC} & $1.08/9$ & $0.17/7$ & $\mathbf{0.06}/5$ & $0.04/5$\tabularnewline
\cline{2-6} 
$\mbox{nnz}(\Omega)=3\%$ & \texttt{G-ISTA} & $\mathbf{0.28}/31$ & $\mathbf{0.10}/13$ & $0.07/9$ & $\mathbf{0.03}/5$\tabularnewline
\hline
 & $\mbox{nnz}(\Omega_{\rho}^{\ast})$/$\kappa(\Omega_{\rho}^{\ast})$ & $31.36\%/46.83$ & $19.74\%/19.93$ & $11.65\%/8.95$ & $5.45\%/3.25$\tabularnewline
\cline{2-6} 
$p=500$ & \texttt{glasso} & $28.61/11$ & $11.27/8$ & $7.22/7$ & $5.34/6$\tabularnewline
\cline{2-6} 
$n=100$ & \texttt{QUIC} & $8.47/23$ & $2.01/13$ & $0.73/9$ & $0.22/7$\tabularnewline
\cline{2-6} 
$\mbox{nnz}(\Omega)=15\%$ & \texttt{G-ISTA} & $\mathbf{4.80}/466$ & $\mathbf{1.09}/115$ & $\mathbf{0.28}/34$ & $\mathbf{0.15}/20$\tabularnewline
\hline
 & $\mbox{nnz}(\Omega_{\rho}^{\ast})$/$\kappa(\Omega_{\rho}^{\ast})$ & $24.81\%/9.78$ & $6.36\%/2.64$ & $0.79\%/1.28$ & $0.03\%/1.08$\tabularnewline
\cline{2-6} 
$p=500$ & \texttt{glasso} & $8.52/6$ & $4.59/5$ & $3.55/4$ & $2.54/3$\tabularnewline
\cline{2-6} 
$n=600$ & \texttt{QUIC} & $1.56/10$ & $0.25/7$ & $\mathbf{0.05}/5$ & $0.03/5$\tabularnewline
\cline{2-6} 
$\mbox{nnz}(\Omega)=15\%$ & \texttt{G-ISTA} & $\mathbf{0.50}/51$ & $\mathbf{0.10}/13$ & $0.06/7$ & $\mathbf{0.02}/3$\tabularnewline
\hline 
\end{tabular}
\end{center}
\caption{Timing comparisons for $p=500$ dimensional datasets, generated as in Section \ref{subsec:synthetic_dataset}.}
\end{table}
%
%
\begin{table}[h]
\small
\begin{center}
\begin{tabular}{|c|c|c|c|c|c|}
\cline{2-6} 
\multicolumn{1}{c|}{} & $\rho$ & $0.02$ & $0.04$ & $0.06$ & $0.08$\tabularnewline
\hline 
\textbf{problem} & \textbf{algorithm} & \textbf{time/iter} & \textbf{time/iter} & \textbf{time/iter} & \textbf{time/iter}\tabularnewline
\hline 
\hline 
 & $\mbox{nnz}(\Omega_{\rho}^{\ast})$/$\kappa(\Omega_{\rho}^{\ast})$ & $26.22\%/54.47$ & $13.68\%/23.74$ & $6.36\%/8.69$ & $2.03\%/2.31$\tabularnewline
\cline{2-6} 
$p=5000$ & \texttt{glasso} & $30814.29/11$ & $12612.85/8$ & $9224.79/7$ & $6184.84/5$\tabularnewline
\cline{2-6} 
$n=1000$ & \texttt{QUIC} & $22547.70/21$ & $3725.07/11$ & $946.11/8$ & $199.48/6$\tabularnewline
\cline{2-6} 
$\mbox{nnz}(\Omega)=3\%$ & \texttt{G-ISTA} & $\mathbf{2651.43/}575$ & $\mathbf{417.20}/94$ & $\mathbf{93.33}/25$ & $\mathbf{39.05}/11$\tabularnewline
\hline 
 & $\mbox{nnz}(\Omega_{\rho}^{\ast})$/$\kappa(\Omega_{\rho}^{\ast})$ & $12.89\%/15.18$ & $3.23\%/3.73$ & $1.11\%/1.60$ & $0.16\%/1.16$\tabularnewline
\cline{2-6} 
$p=5000$ & \texttt{glasso} & $10307.26/7$ & $8725.86/7$ & $4846.58/4$ & $3587.35/3$\tabularnewline
\cline{2-6} 
$n=6000$ & \texttt{QUIC} & $3108.14/10$ & $396.60/7$ & $86.66/5$ & $\mathbf{21.56}/4$\tabularnewline
\cline{2-6} 
$\mbox{nnz}(\Omega)=3\%$ & \texttt{G-ISTA} & $\mathbf{268.28}/70$ & $\mathbf{50.17}/14$ & $\mathbf{35.67}/10$ & $28.82/8$\tabularnewline
\hline
 & $\mbox{nnz}(\Omega_{\rho}^{\ast})$/$\kappa(\Omega_{\rho}^{\ast})$ & $26.08\%/80.04$ & $13.93\%/37.12$ & $6.91\%/16.52$ & $2.47\%/3.08$\tabularnewline
\cline{2-6} 
$p=5000$ & \texttt{glasso} & $36302.86/11$ & $13413.57/8$ & $9914.41/7$ & $7408.33/6$\tabularnewline
\cline{2-6} 
$n=1000$ & \texttt{QUIC} & $22667.29/21$ & $4649.99/12$ & $1329.20/9$ & $240.25/6$\tabularnewline
\cline{2-6} 
$\mbox{nnz}(\Omega)=15\%$ & \texttt{G-ISTA} & $\mathbf{3952.85}/849$ & $\mathbf{701.57}/170$ & $\mathbf{176.11}/45$ & $\mathbf{42.46}/12$\tabularnewline
\hline
 & $\mbox{nnz}(\Omega_{\rho}^{\ast})$/$\kappa(\Omega_{\rho}^{\ast})$ & $18.65\%/27.69$ & $5.34\%/7.26$ & $0.66\%/1.41$ & $0.03\%/1.09$\tabularnewline
\cline{2-6} 
$p=5000$ & \texttt{glasso} & $13180.47/7$ & $9052.77/7$ & $4842.28/4$ & $3578.05/3$\tabularnewline
\cline{2-6} 
$n=6000$ & \texttt{QUIC} & $6600.91/12$ & $795.46/8$ & $59.03/5$ & $\mathbf{16.10}/4$\tabularnewline
\cline{2-6} 
$\mbox{nnz}(\Omega)=15\%$ & \texttt{G-ISTA} & $\mathbf{804.93}/189$ & $\mathbf{103.69}/23$ & $\mathbf{36.17}/10$ & $18.87/5$\tabularnewline
\hline 
\end{tabular}
\end{center}
\caption{Timing comparisons for $p=5000$ dimensional datasets, generated as in Section \ref{subsec:synthetic_dataset}.}
\end{table}

\begin{table}[h]
\small
\begin{center}
\begin{tabular}{|c|c|c|c|c|c|}
\cline{2-6} 
\multicolumn{1}{c|}{} & $\rho$ & $0.15$ & $0.30$ & $0.45$ & $0.60$\tabularnewline
\hline 
\textbf{problem} & \textbf{algorithm} & \textbf{time/iter} & \textbf{time/iter} & \textbf{time/iter} & \textbf{time/iter}\tabularnewline
\hline 
\hline 
 & $\mbox{nnz}(\Omega_{\rho}^{\ast})$/$\kappa(\Omega_{\rho}^{\ast})$ & $5.29\%/290.03$ & $3.39\%/88.55$ & $2.31\%/29.69$ & $1.63\%/8.96$\tabularnewline
\cline{2-6} 
$p=682$ & \texttt{glasso} & $106.18/24$ & $120.18/34$ & $110.54/35$ & $40.52/13$\tabularnewline
\cline{2-6} 
$n=158$ & \texttt{QUIC} & $\mathbf{12.36}/19$ & $\mathbf{2.71}/11$ & $\mathbf{1.08/}9$ & $\mathbf{0.54}/7$\tabularnewline
\cline{2-6} 
\emph{Dataset: estrogen} & \texttt{G-ISTA} & $43.96/2079$ & $11.99/595$ & $3.23/172$ & $1.00/53$\tabularnewline

\hline 
\cline{2-6} 
\multicolumn{1}{c|}{} & $\rho$ & $0.2$ & $0.4$ & $0.6$ & $0.8$\tabularnewline
\hline 
\textbf{problem} & \textbf{algorithm} & \textbf{time/iter} & \textbf{time/iter} & \textbf{time/iter} & \textbf{time/iter}\tabularnewline
\hline 
\hline 
 & $\mbox{nnz}(\Omega_{\rho}^{\ast})$/$\kappa(\Omega_{\rho}^{\ast})$ & $2.02\%/1075.8$ & $1.77\%/289.63$ & $1.34\%/23.02$ & $0.22\%/2.10$\tabularnewline
\cline{2-6} 
$p=1732$ & \texttt{glasso} & $1919.64/31$ & $2535.86/46$ & $1144.07/22$ & $254.14/5$\tabularnewline
\cline{2-6} 
$n=157$ & \texttt{QUIC} & $\mathbf{497.47}/18$ & $\mathbf{103.76}/13$ & $\mathbf{10.16}/8$ & $2.31/7$\tabularnewline
\cline{2-6} 
\emph{Dataset: temp} & \texttt{G-ISTA} & $1221.40/6194$ & $183.20/819$ & $30.01/159$ & $\mathbf{1.78}/10$\tabularnewline
\hline 
\end{tabular}
\end{center}
\caption{Timing comparisons for the real datasets described above.}
\label{fig:timing_real}
\end{table}

\end{document}